\newtheorem{myDef}{Definition}
\newtheorem{theorem}{Theorem}
\newcommand{\tabincell}[2]{
\begin{tabular}{@{}#1@{}}#2\end{tabular}
}
\providecommand\BibTeX{{%
\normalfont B\kern-0.5em{\scshape i\kern-0.25em b}\kern-0.8em\TeX}}}
\begin{document}

\title{Merkle-Sign: Watermarking Framework for Deep Neural Networks in Federated Learning}

\author{Fang-Qi Li}
\email{solour_lfq@sjtu.edu.cn}
\affiliation{%
\institution{Shanghai Jiao Tong University, School of Electronic Information and Electrical Engineering}
\city{Shanghai}
\country{China}
}

\author{Shi-Lin Wang}
\email{wsl@sjtu.edu.cn}
\authornotemark[1]
\affiliation{%
\institution{Shanghai Jiao Tong University, School of Electronic Information and Electrical Engineering}
\city{Shanghai}
\country{China}
}

\author{Alan Wee-Chung Liew}
\email{a.liew@griffith.edu.au}
\authornotemark[1]
\affiliation{%
\institution{Griffith University, School of Information and Communication Technology}
\city{Gold Coast Campus}
\country{Australia}
}


\begin{abstract}
With the wide application of deep learning models, it is important to verify an author's possession over a deep neural network model, e.g. by embedding watermarks and protect the model.
The development of distributed learning paradigms such as federated learning (FL) leverages new challenges for model protection.
Apart from the independent verification of all authors' ownership, the author collaboration should be able to recover its participant's identity against adaptive attacks and trace traitors.
To meet those requirements, in this paper, we demonstrate a watermarking framework, $\texttt{Merkle-Sign}$, for protecting deep neural networks in FL.
By incorporating state-of-the-art watermarking schemes and the cryptological primitive designed for distributed storage, the framework meets the prerequisites for ownership verification in FL.
This work paves the way for generalizing watermark as a practical security mechanism for protecting deep learning models in distributed learning platforms.
\end{abstract}

\begin{CCSXML}
<ccs2012>
<concept>
<concept_id>10002978.10002991.10002996</concept_id>
<concept_desc>Security and privacy~Digital rights management</concept_desc>
<concept_significance>500</concept_significance>
</concept>
<concept>
<concept_id>10002978.10003022.10003028</concept_id>
<concept_desc>Security and privacy~Domain-specific security and privacy architectures</concept_desc>
<concept_significance>500</concept_significance>
</concept>
</ccs2012>
\end{CCSXML}

\ccsdesc[500]{Security and privacy~Digital rights management}
\ccsdesc[500]{Security and privacy~Domain-specific security and privacy architectures}

\keywords{deep learning model protection, federated learning, machine learning security}

\maketitle

\section{Introduction}
Deep neural networks (DNN) are intelligent systems that provide services by learning from data and consuming enormous computational resources.
DNN models have been successfully applied in internet of things~\cite{iot2,iot3}, intrusion detection~\cite{shone2018deep,yin2017deep}, etc.
With more emerging applications, their reliability and security are gaining more attention.
A crucial task in artificial intelligence security is to protect DNN models as intellectual properties by proving authors' ownership.
Mechanisms designed for the integrity of traditional communication are inappropriate for DNN for two reasons:
(1) Unlike traditional communication where information is passed from one agent to another, DNN models are usually uploaded onto public websites for credit, visibility, and peer review.
So schemes designed for identity verification in bidirectional channels are unsuitable for DNN models.
(2) It is easy to modify a published model (e.g., changing the last bits of one parameter) without significantly damaging its commercial utility.
This fact challenges the direct application of cryptological primitives under which changing a single bit within the protected object results in a completely different signature.

Having observed that DNN models share similar properties with multi-media objects (broadcast transmission and semantic invariance under slight modifications), researchers resorted to watermark, which protects the integrity for multi-media objects.
The metrics for evaluating a DNN watermarking scheme are similar to those for the multi-media setting.
Different kinds of redundancy within DNN models, such as parameter representation, outputs of intermediate layers, and backdoor triggers can be exploited to encode the author's identity as watermarks.
There have been various watermarking schemes designed to satisfy these security requirements.

Modern paradigms of model training and distributing such as Federated Learning (FL)~\cite{li2020federated} raise new challenges for DNN watermarking.
In FL, authors sharing different local datasets collaborate to train a model.
As a result, the final product contains the contribution of all the authors.
Such an organization results in a different threat model and ownership verification requirements from those in the traditional setting where one party completes the training of the entire DNN model.

In FL, a watermarking scheme should encode all authors' identity information into the model while:
(1) Each agent participating in the training can prove its ownership without informing its co-authors or the aggregator server.
(2) One agent participating in the training cannot falsify itself as its co-author.
(3) If one agent undergoes several adaptive attacks so its watermark is erased then its co-authors can recover its identity proof.
(4) If a traitor within the author union pirates the model then its identity can be correctly tracked.
It is hard to fulfill all these requirements without a sophisticated verification protocol.
Meanwhile, such protocol would raise additional demands for the underlying watermarking scheme for identity encoding.
These concerns increase the difficulty in model protection for FL.

Being confronted with all these challenges, we propose a unified DNN watermarking framework that meets the practical security requirements for FL.
The proposed framework adopts a watermarking scheme and a public verification protocol to ensure unforgeable and robust ownership proof.
We prove that such a scheme meets the requirements for FL listed above.
Apart from the top-level protocol, our design puts forward more requirements for the underlying watermarking schemes.
To the best of our knowledge, this is the first proposal for provable ownership protection in the FL scenario.
The contributions of this paper are:
\begin{itemize}
\item We formulate security requirements for FL, a practical scenario for secure machine learning.
\item A framework for ownership verification, $\texttt{Merkle-Sign}$, is proposed to meet all the requirements for FL.
We also propose an underlying watermarking scheme, $\texttt{ATGF}$, to increase the applicability of $\texttt{Merkle-Sign}$ in real-world settings.
\item Experiments demonstrate the utility of the proposed framework and examine the capability of established watermarking schemes in FL.
\end{itemize}

\section{Backgrounds and Preliminaries}
\label{section:2}
\subsection{DNN watermark}
Let $M_{\text{clean}}$ be a model trained to fulfil a primary task $\mathcal{T}_{\text{primary}}$.
The author embeds its identity information, $\texttt{key}\in\mathcal{K}$, into the model, where $\mathcal{K}$ is the space of legal keys.
Such embedding can be revealed later by a module $\texttt{verify}$, a probabilistic algorithm with binary output.
A watermarking scheme $\texttt{WM}=\left\{\texttt{Gen},\texttt{Embed} \right\}$ consists of one module for key generation and one for key embedding:
$$
\begin{aligned}
\texttt{key}&\leftarrow \texttt{Gen}(1^{N}),\\
(M_{\text{WM}},\texttt{verify})&\leftarrow \texttt{Embed}(M_{\text{clean}},\texttt{key}),
\end{aligned}
$$
where $N$ is the security parameter.
When the author finds that its model has been stolen, it provides $\left\{\texttt{key},\texttt{verify}\right\}$ as its evidence.
If $\texttt{verify}$ agrees with $\texttt{key}$ concerning the suspicious model $M$ then the author's ownership is proven.
Such proof can be conducted and announced by a trusted third party.
This centralized setting introduces extra security risks and a tremendous burden to the center.
Instead, as~\cite{mengelkamp2018blockchain, reyna2018blockchain, khan2018iot}, the proof as a consensus service can be done in a public and decentralized manner so any party can participate in it.
In particular, we assume that there exists a community of public verification agents.
Each individual within the community examines the watermark independently and provides its result.
The ownership verification outcome is voted across the entire community to defend against potential conspiracy and perjury.
This community is motivated by the necessity of protecting DNN intellectual property.
An agent correctly participating in verification is going to be assigned corresponding credits with which it can initialize a verification process for its product.

Current \emph{watermarking schemes} mainly focus on embedding information into the DNN models.
Zhang \textit{et al.} proposed a backdoor-based watermarking scheme~\cite{zhang2018protecting} for the black-box setting using triggers.
Backdoor is an attack against DNN by triggering unexpected results.
In backdoor-based watermarking, an author conducts the backdoor attack on its model and claims the exceptional output as its identity proof.
Schemes ensuring the secure generation and transmission of backdoors provide better security~\cite{adi2018turning,zhu2020secure}.
Moreover, variational autoencoder~\cite{li2019prove}, adversarial samples~\cite{le2020adversarial} and out-of-range samples~\cite{li2019persistent} in $\texttt{Wonder Filter}$, have been adopted as candidates for the backdoor.

Uchida \textit{et al.}~\cite{uchida2017embedding} proposed the pioneering white-box watermarking scheme.
Their scheme embeds the author's digital signature into the model's weight.
In~\cite{guan2020reversible}, Guan \textit{et al.} inserted a reversible watermark into the model's parameter.
\texttt{Deep-Sign} proposed by Davish \textit{et al.}~\cite{darvish2019deepsigns} builds the watermark on the intermedium output of the DNN model on certain samples instead of the weights.
The scheme \texttt{MTL-Sign} proposed by Li~\textit{et al.} in~\cite{ours} models watermarking embedding as an extra learning task.
Since regulating the intermediate output of the DNN model is more difficult than tuning weights, these hybrid schemes are more robust and secure than weight-based schemes.
Zhang \textit{et al.}~\cite{zhang2021deep} adopted deep watermarking to protect DNN models specialized for image processing.

Despite all the recent works, DNN watermarking is still far from practical application.
Most of the proposals overlooked the top-level protocol for key generation, key distribution, and verification.
A practical \emph{watermarking framework} must include both the bottom-level watermarking scheme and the top-level protocol.
The authorship verification of DNN models should be conducted publicly in a decentralized manner.
However, the public verification process itself raises additional risks as we will elaborate on later.


\begin{figure*}[!htbp]
\centering
\subfigure[The first verification.]{
\begin{minipage}[htbp]{0.3\linewidth}
\centering
\includegraphics[width=5.5cm]{./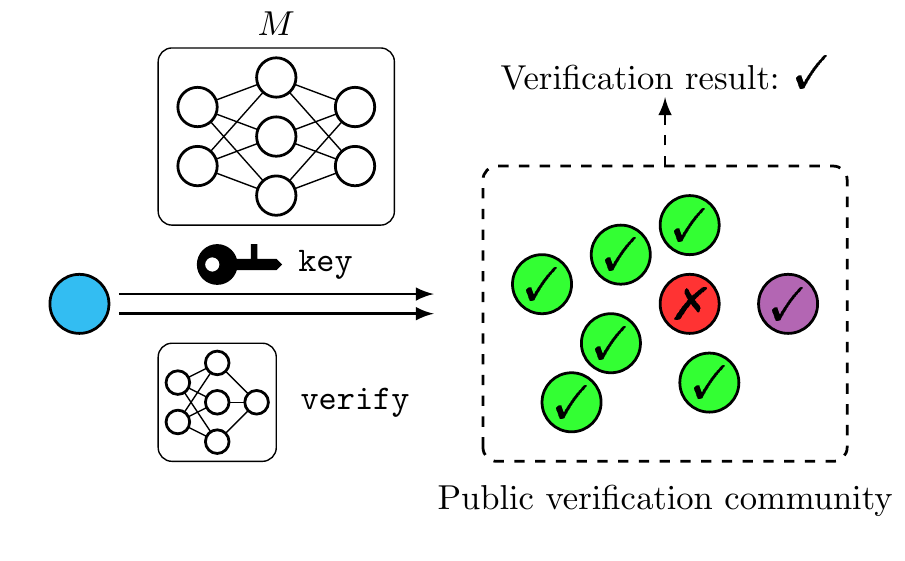}
\end{minipage}%
}%
\subfigure[The spoil attack.]{
\begin{minipage}[htbp]{0.3\linewidth}
\centering
\includegraphics[width=5cm]{./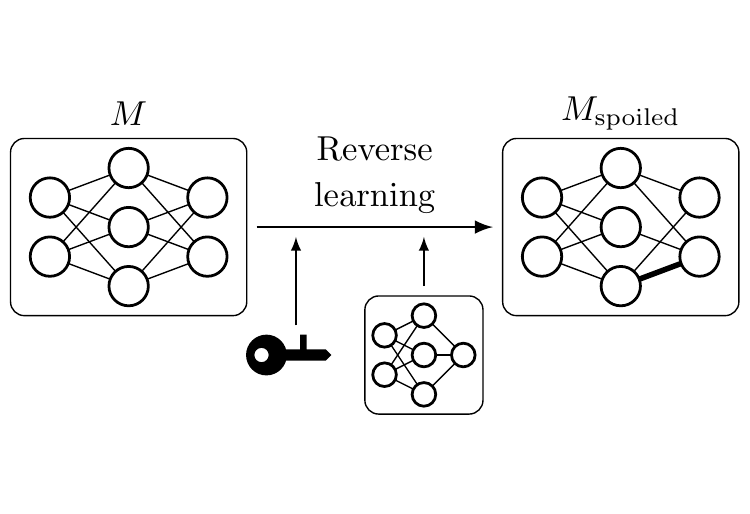}
\end{minipage}%
}%
\subfigure[The second verification.]{
\begin{minipage}[htbp]{0.3\linewidth}
\centering
\includegraphics[width=5.5cm]{./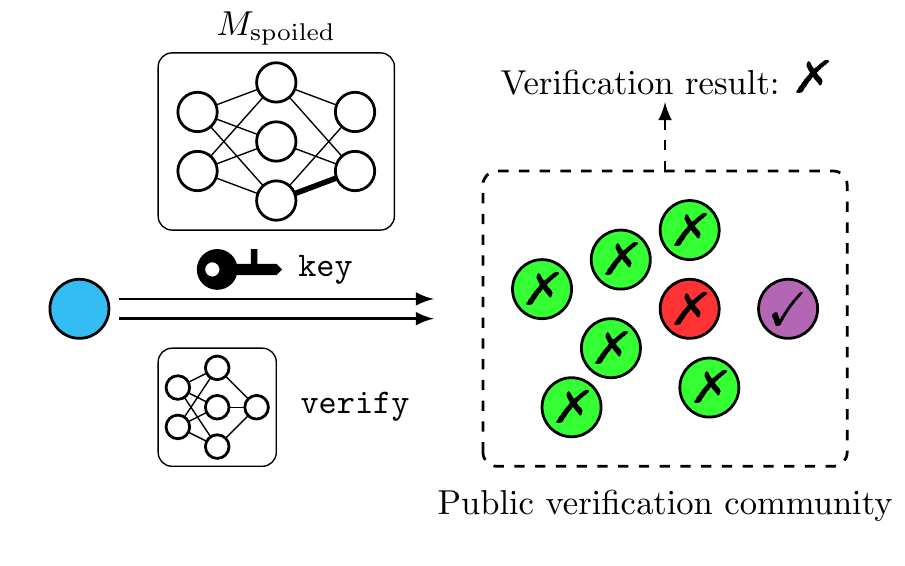}
\end{minipage}%
}%
\caption{The blue node is the author, green nodes are benign agents, the red node is a malicious agent, and the purple one is the eavesdropping adversary.}
\label{figure:1}
\end{figure*}

\begin{table*}[htbp]
\caption{Watermarking schemes evaluated by the basic security requirements.}
\begin{center}
\begin{tabular}{cc|c|c|c|c|c|c|c}
\toprule[1.5pt]
\multicolumn{2}{c|}{\multirow{2}{*}{\tabincell{c}{\ \\\textbf{Watermarking}\\ \textbf{schemes}}}} & \multicolumn{7}{c}{\textbf{Security requirements}}\\
\cline{3-9}
& & Correctness & \tabincell{c}{Functionality-\\preserving.} & \tabincell{c}{Security\\against\\tuning.} & Covertness. & Privacy. & \tabincell{c}{Security\\against\\overwriting.} & \tabincell{c}{Security\\against\\piracy.} \\
\midrule[1pt]
\multirow{4}{*}{\tabincell{c}{Backdoor-\\based.}} & \multicolumn{1}{|c|}{Zhang~\emph{et al.}~\cite{zhang2018protecting}} & \checkmark & \checkmark & \checkmark & \checkmark & $\times$ & $\times$ & $\times$ \\
\cline{2-9}
& \multicolumn{1}{|c|}{\texttt{Wonder Filter}~\cite{li2019persistent}} & \checkmark & \checkmark & \checkmark & $\times$ & $\times$ & $\times$ & \checkmark \\
\cline{2-9}
& \multicolumn{1}{|c|}{Adi~\emph{et al.}~\cite{adi2018turning}} & \checkmark & \checkmark & \checkmark & \checkmark & $\times$ & \checkmark & \checkmark \\
\midrule
\multirow{2}{*}{\tabincell{c}{Weight-\\based.}} & \multicolumn{1}{|c|}{Uchida~\emph{et al.}~\cite{uchida2017embedding}} & \checkmark & \checkmark & \checkmark & \checkmark & $\times$ & \checkmark & \checkmark \\
\cline{2-9}
& \multicolumn{1}{|c|}{Guan~\emph{et al.}~\cite{guan2020reversible}} & \checkmark & \checkmark & $\times$ & $\times$ & $\times$ & $\times$ & \checkmark \\
\midrule
\multirow{2}{*}{\tabincell{c}{Hybrid\\ schemes.}} & \multicolumn{1}{|c|}{\texttt{Deep-Sign}~\cite{darvish2019deepsigns}} & \checkmark & \checkmark & \checkmark & $\times$ & $\times$ & \checkmark & \checkmark \\
\cline{2-9}
& \multicolumn{1}{|c|}{\texttt{MTL-Sign}~\cite{ours}} & \checkmark & \checkmark & \checkmark & \checkmark & \checkmark & \checkmark & \checkmark \\
\bottomrule[1pt]
\multicolumn{9}{l}{\checkmark means the security requirement holds. $\times$ means the corrseponding requirement is not met/formally proven.}\\
\end{tabular}
\label{table:1}
\end{center}
\end{table*}

A DNN watermarking scheme $\texttt{WM}$ has to satisfy the following basic security requirements.

\subsubsection{Correctness}
The author can prove its ownership correctly:
$$
\text{Pr}\left\{\texttt{verify}(M_{\text{WM}},\texttt{key})=1 \right\}\geq 1-\epsilon,
$$
where $\epsilon$ is a negligible function of the security parameter $N$.
For an adversary with the incorrect evidence, $\texttt{verify}$ is not going to mistake it as the authentic author:
$$
\text{Pr}_{\texttt{key}'\leftarrow\mathcal{K}}\left\{\texttt{verify}(M_{\text{WM}},\texttt{key}')=1 \right\}\leq\epsilon,
$$
in which $\texttt{key}'$ is a uniformly sampled instance from $\mathcal{K}$.

\subsubsection{Functionality-preserving}
Let $\mathcal{E}$ denote a task-dependent metric that evaluates a model's performance, e.g., the accuracy on the test set for classification.
It is desirable that watermarking does not substantially decrease the model's functionality, so:
$$\mathcal{E}(M_{\text{WM}})\geq \mathcal{E}(M_{\text{clean}})-\epsilon$$
holds almost always, where $\epsilon$ measures the author's tolerance on the watermark's side-effect.

\subsubsection{Security against tuning}
The adversary can fine-tune a model on a smaller dataset similar to that of $\mathcal{T}_{\text{primary}}$'s.
Such tuning could ruin some weight-based watermarks like the reversible system in~\cite{guan2020reversible}.
A watermarking scheme is secure against tuning if tuning cannot invalidate the ownership verification, which can be formulated as:
$$
\text{Pr}\left\{\texttt{verify}(M_{\text{tuned}},\texttt{key})=1 \right\}\geq 1-\epsilon,
$$
where $M_{\text{tuned}}$ is obtained by having an adversary tune $M_{\text{WM}}$.

\subsubsection{Covertness}
The watermark should be hidden from the adversary.
A necessary condition for the watermark's covertness is: the parameters of $M_{\text{WM}}$ should not deviate from those of $M_{\text{clean}}$ too much.
This deviation can be measured by the statistics of the two models' parameters as in the \emph{property inference}~\cite{wang2019robust}.

\subsubsection{Privacy-preserving}
In scenarios where anonymity is necessary such as model competitions, the watermarked model must not reveal the author's identity.
A watermarking scheme is privacy-preserving~\cite{ours} if no efficient adversary, given $\texttt{key}_{1}$ and $\texttt{key}_{2}$, can distinguish $M_{\text{WM},1}$ from $M_{\text{WM},2}$ where:
$$
\begin{aligned}
(M_{\text{WM},1},\texttt{verify}_{1})&\leftarrow \texttt{Embed}(M_{\text{clean}},\texttt{key}_{1}).\\
(M_{\text{WM},2},\texttt{verify}_{2})&\leftarrow \texttt{Embed}(M_{\text{clean}},\texttt{key}_{2}).
\end{aligned}
$$

\subsubsection{Security against overwriting}
An adversary can embed its watermark into the model and \emph{redeclare} the model with its watermark as its product to the verification community.
Such overwriting must not invalidate the original watermark.
The redeclaration can be solved by either using an authorized time server or a decentralized consensus protocol with which an author authorizes its time-stamp to the verification community before publishing the DNN model~\cite{ours}.

\subsubsection{Security against piracy}
For watermarking schemes with structural flaws or whose key space is too small (as in the cases of most backdoor-based schemes), an adversary can trivially claim the ownership by forging the evidence.
A necessary condition for the security against ownership piracy is to incorporate cryptological primitives as in~\cite{zhu2020secure}.

\subsubsection{Security against the spoil attack}
An author might have to prove its possession over a DNN model for \emph{multiple} times.
For example, it discovers that its model has been stolen, tuned, and sold to multiple parties.
If an adversary having obtained the evidence can spoil the watermark and invalidate the verification afterward then the watermarking scheme is insecure.
We define this attack as \emph{the spoil attack}.
The procedure of a spoil attack is illustrated in Fig.~\ref{figure:1}.
An eavesdropping adversary firstly participates in the proof for a author, during which process it obtains $\left\{\texttt{key},\texttt{verify}\right\}$ for model $M$.
It then tunes $M$ into $M_{\text{spoiled}}$ in a reverse learning manner so that the original $\texttt{verify}$ cannot recognize $(M_{\text{spoil}},\texttt{key})$.
Finally, the adversary sells $M_{\text{spoiled}}$.

To the best of our knowledge, no existing watermarking scheme is secure against the spoil attack.
For backdoor-based watermarking schemes, the adversary can fit the model to randomly shuffled labels on the backdoor triggers.
For weight-based watermarking schemes, the adversary can replace the watermarked layers.
The concurrent watermarking schemes and their security level w.r.t. all requirements are listed as in Table.~\ref{table:1}

\subsection{Federated learning}
To associate a broader range of data sources while preserving the privacy of each data provider, many distributed learning frameworks have been proposed.
The representative example is Federated Learning (FL)~\cite{li2020federated}, where a collection of $K$ participants $\mathcal{U}=\left\{u_{k}\right\}_{k=1}^{K}$ and optionally an aggregator server $\mathcal{A}$ cooperate to train a model.
FL can be classified into horizontal FL and vertical FL.
In horizontal FL, each party holds different samples with an identical set of features.
This paradigm has been applied in medical and financial settings~\cite{yang2019federated}.
In the vertical FL, the feature spaces of different parties are different.
As illustrated in Fig.~\ref{figure:2}, the centralized client-server setting of FL consists of many epochs, during the $i$-th epoch the aggregator distributes the current model $M^{(i)}$ to all authors.
Each author $u_{k}$ independently computes the direction to which the model should evolve w.r.t. its local dataset $\mathcal{D}_{k}$ and transmits the gradient $\Delta M^{(i)}_{k}$ back.
Finally, the aggregator collects the feedback from all authors, updates the model, then starts another round.
Possible ways of model aggregation are model average, gradient average as in~\eqref{equation:ga}, secure aggregation, homomorphic encryption-based weighted average, etc~\cite{bonawitz2017practical}.
\begin{equation}
\label{equation:ga}
M^{(i+1)}=M^{(i)}+\alpha\cdot\sum_{k=1}^{K}|\mathcal{D}_{k}|\cdot \Delta M^{(i)}_{k}.
\end{equation}
During the entire procedure, each author communicates with the aggregator through an encrypted channel: it downloads the current model and uploads the information for model updating.
No data is transmitted through any public channel.

\begin{figure}[htbp]
\centering
\includegraphics[width=7.2cm]{./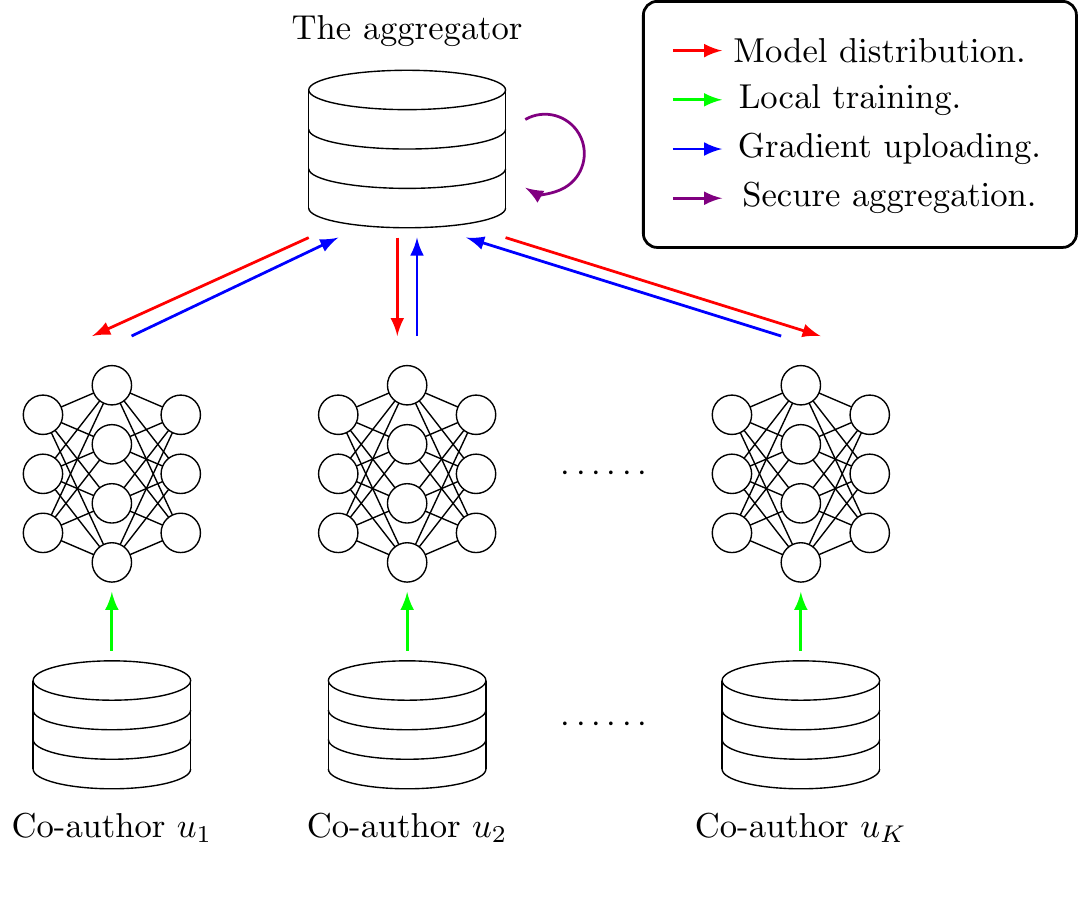}
\caption{The client-server architecture for FL.}
\label{figure:2}
\end{figure}

\begin{table*}[htbp]
\caption{Dependence of the advanced requirements on basic security requirements.}
\begin{center}
\begin{tabular}{c|c|c|c|c|c|c|c|c}
\toprule[1.5pt]
\multicolumn{1}{c|}{\multirow{2}{*}{\tabincell{c}{\ \\\textbf{Advanced}\\ \textbf{requirements}}}} & \multicolumn{8}{c}{\textbf{Basic security requirements}}\\
\cline{2-9}
& Correctness & \tabincell{c}{Functionality-\\preserving.} & \tabincell{c}{Security\\against\\tuning.} & Covertness. & Privacy. & \tabincell{c}{Security\\against\\overwriting.} & \tabincell{c}{Security\\against\\piracy.} & \tabincell{c}{Multiple-time\\verification.} \\
\midrule[1pt]
\multicolumn{1}{c|}{Independency.} & \checkmark & \checkmark & $\times$ & $\times$ & $\times$ & \checkmark & \checkmark & \checkmark \\
\midrule
\multicolumn{1}{c|}{Privacy-preserving.} & $\times$ & $\times$ & $\times$ & \checkmark & $\times$ & $\times$ & $\times$ & $\times$ \\
\midrule
\multicolumn{1}{c|}{Recovery.} & \checkmark & \checkmark & $\times$ & $\times$ & $\times$ & $\times$ & $\times$ & $\times$ \\
\midrule
\multicolumn{1}{c|}{Traitor-tracing.} & \checkmark & \checkmark & \checkmark & \checkmark & $\times$ & $\times$ & \checkmark & \checkmark \\
\bottomrule[1pt]
\multicolumn{9}{l}{\checkmark means relevant. $\times$ means irrelevant.}\\
\end{tabular}
\label{table:2}
\end{center}
\end{table*}

Current studies on the security of FL focus on privacy protection against curious or malicious aggregators or authors.
It has been claimed that the gradient information $\Delta M^{(i)}_{k}$ might leak information of $\mathcal{D}_{k}$ under reverse engineering~\cite{augasta2012reverse}.
Therefore differential privacy~\cite{wei2020federated,xu2019verifynet} methods have been widely adopted to protect the encrypted gradients.
For example, by perturbating the encrypted gradients, neither the curious aggregator nor its conspirators can deduce the noiseless gradient.
Yet the perturbation strategy keeps the summation of gradients invariant and preserves the stability in convergence.

As a separate aspect for FL security, DNN model protection is confronted with distinct threat models.

\section{The Threat Model and Security Requirements}
\label{section:3}
\subsection{Watermark capacity}
\label{section:overwriting}
It is necessary that all authors' identities are embedded into the DNN model, inclusively and independently.
Therefore practical watermarking scheme for DNN models in the FL setting has to embed multiple pieces of identity information into the model.
It is generally assumed that watermark is embedded into a DNN by utilizing the model's redundancy in parameterization.
To provide correct and independent verification for each author, it is necessary to quantitively measure this redundancy.
We define a DNN model's \emph{watermark capacity} w.r.t. $\texttt{WM}$ and the decline of its performance $\delta$, $\texttt{cap}_{\texttt{WM}}^{\delta}$ as follows:
\begin{myDef}
\emph{The watermark capacity for a DNN model, $\texttt{cap}_{\texttt{WM}}^{\delta}$, is the maximal number of keys that can be correctly embedded by $\texttt{WM}$ into the model until the DNN model's performance drops by $\delta$ w.r.t. the metric $\mathcal{E}$ defined in its primary task.}
\end{myDef}
The value $\texttt{cap}_{\texttt{WM}}^{\delta}$ measures the upper bound of the number of successfully embedded watermarks inside a model.
Formally, $\texttt{cap}_{\texttt{WM}}^{\delta}$ is the maximal $q$ satisfying the following conditions:
\begin{equation}
\label{equation:capacity}
\begin{aligned}
(M_{1},\texttt{verify}_{1})&\leftarrow\texttt{Embed}(M_{\text{clean}},\texttt{key}_{1}),\\
(M_{2},\texttt{verify}_{2})&\leftarrow\texttt{Embed}(M_{1},\texttt{key}_{2}),\\
&\cdots\\
(M_{q},\texttt{verify}_{q})&\leftarrow\texttt{Embed}(M_{q-1},\texttt{key}_{q}),\\
\mathcal{E}(M_{q})&\geq \mathcal{E}(M_{\text{clean}})-\delta,
\end{aligned}
\end{equation}
where $\texttt{key}_{q}$ is generated by $\texttt{Gen}$ and all $q$ watermarks can be correctly verified.
A very deep DNN model naturally has a large capacity since the watermarking embedding process usually modifies only a small number of parameters.
Although it is difficult to formally calculate $\texttt{cap}_{\texttt{WM}}^{\delta}$ for arbitary $\texttt{WM}$, it can be empirically estimated from~\eqref{equation:capacity}.

\subsection{Requirements in FL ownership verification}
Apart from the ordinary threats against the DNN watermark, the collaboration of authors in FL gives rise to new threats and requirements.
For example, the authors must not breach each other's privacy from the watermark.
Meanwhile, we expect that authors can help each other to recover the identity proof against the spoil attack.
In cases a malicious author~\cite{fung2018mitigating} sells the intermediate model as its product, it should be correctly identified.
The security requirements for FL are detailed as the following aspects:

\subsubsection{Independency}
When FL terminates, each author $u_{k}$ holds an evidence pair $(\texttt{key}_{k},\texttt{verify}_{k})$, in which $\texttt{key}_{k}$ is deteremined by $u_{k}$.
After the model $M$ absorbing the contributions of all authors is published, it is necessary that:
$$\text{Pr}\left\{\texttt{verify}_{k}(M,\texttt{key}_{k})=1 \right\}\geq 1-\epsilon.$$
So $u_{k}$ can independently prove its contribution in $M$ without informing any other parties.

\subsubsection{Privacy-preserving}
For author $u_{s}\in\mathcal{U}$, $u_{s}\neq u_{k}$, no information about $u_{k}$'s identity, especially $\texttt{key}_{k}$, appears in $u_{s}$'s perspective when interacting with the aggregator/being chained up in the decentralized setting.
Formally, $u_{s}$ cannot succeed in pretending to be $u_{k}$.

\subsubsection{Recovery}
Once $u_{k}$ proves its ownership over $M$, an eavesdropping adversary might conduct the spoil attack to invalidate this proof.
Given the shared interest and cooperation in property protection from all collaborating authors, it is expected that $u_{k}$'s co-authors can recover $u_{k}$'s ownership over $M$.
Hence a spoil attack against $u_{k}$'s watermark is insufficient to disprove $u_{k}$'s authorship completely.

\subsubsection{Traitor-tracing}
In DNN model commercialization, tracing of unauthorized reselling can be achieved by watermarking~\cite{9359144}.
In FL, a traitor may participate in a few epochs of training, obtain the intermediate model, then claim the current model as its product to the public, or sell the model to another party.
The aggregator supervising the FL training process, or the collaboration of honest authors in the decentralized setting, should be able to correctly identify the traitor.

The dependency of these four advanced requirements on the basic security properties is presented in Table.~\ref{table:2}.

\subsection{Requirements for the watermarking scheme in FL}
\label{section:2.3}
In FL, the identity information of all authors has to be embedded into the final model.
Moreover, the model distributed by the server at each epoch must be slightly different.
Otherwise, traitor-tracing is impossible.
Therefore, the underlying watermarking scheme should satisfy the following properties:
\begin{itemize}
\item The watermark capacity is large to enable both multiple-time and independent verification for each author.
\item The watermark embedding is efficient so that leaving hooks in DNN models for traitor-tracing is feasible.
\end{itemize}
In addition, the verification protocol should establish a correlation between the identity information of all authors to accomplish the recovery property.

\section{The Merkle-Sign Framework}
\label{section:4}
\subsection{Motivation}
To meet the requirements in FL model protection, we have to use a watermarking scheme satisfying the two requirements listed in Section~\ref{section:2.3}.
To reduce the cost of public verification, we propose a public verification protocol that takes the merit of the Merkle-tree structure.
Merkle-tree is a data structure for data integrity verification in cloud storage~\cite{li2013efficient}.
Unlike a naive hash function that simply hashes a list of files into a string, the Merkle-tree hash allows partial authentication of a subset of the input, which is necessary for distributed storage and retrieval.
This property forms the basis for independent verification and recovery.

\subsection{$\texttt{Merkle-Sign}$ for FL}
We propose \texttt{Merkle-Sign}, an efficient and secure watermarking framework for FL model protection, which operates as Algo.~\ref{algorithm:centralized} and is visualized in Figure~\ref{figure:5}.
\begin{figure}[htbp]
\centering
\includegraphics[width=8cm]{./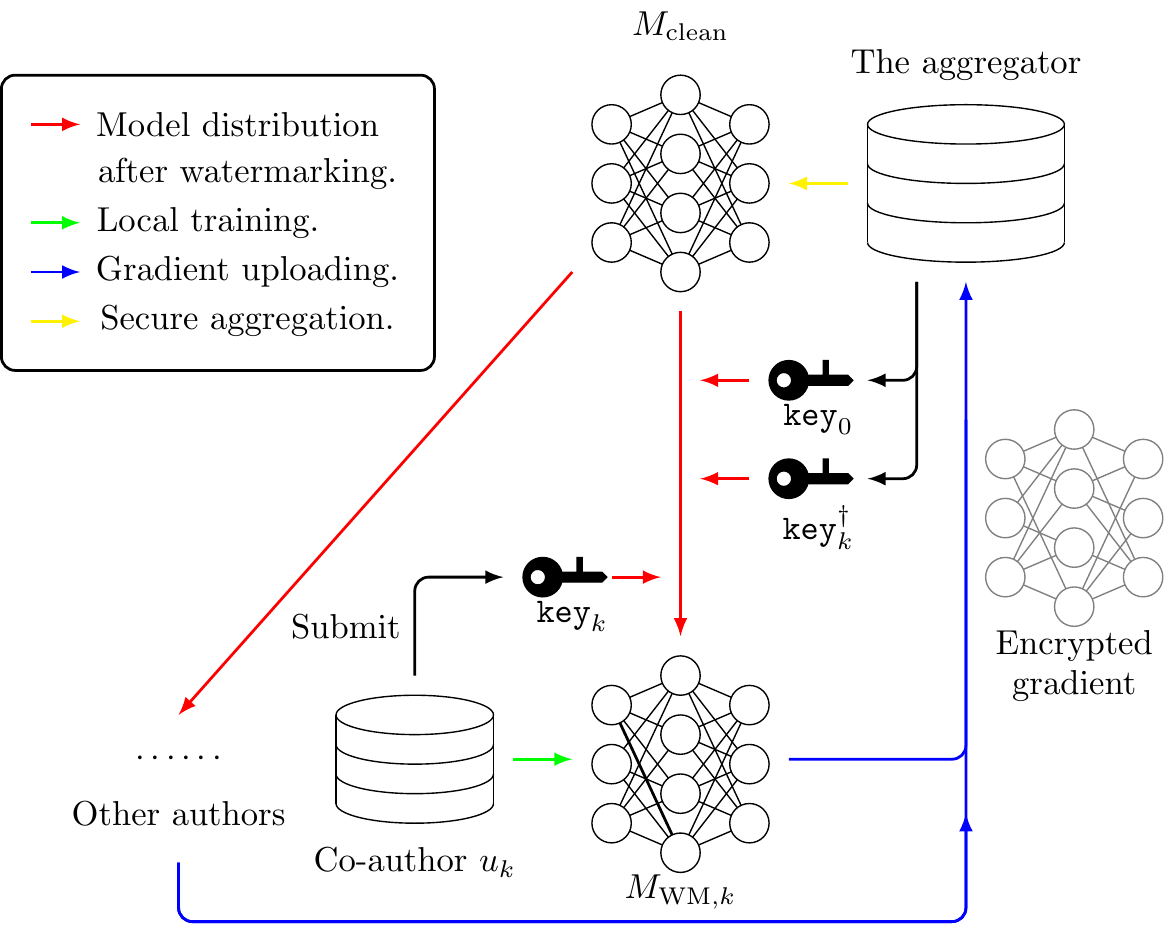}
\caption{The $\texttt{Merkle-Sign}$ watermarking framework for centralized FL.}
\label{figure:5}
\end{figure}
In which $\texttt{Merkle}$ is a Merkle-tree style combinator.
The idea is that an aggregator $\mathcal{A}$ is responsible for embedding the identity information of all authors into the final DNN model.
To ensure that all evidence is valid and the decrease of the model's performance is upper bounded by $\delta$, it is necessary that:
$$\texttt{cap}^{\delta}_{\texttt{WM}}\geq (K+1).$$
To trace the traitor before the training terminates, the intermediate model being sent to $u_{k}$ contains a hook, which is a surveillance key specialized for $u_{k}$, $\texttt{key}_{k}^{\dag}$.
This key is only known to $\mathcal{A}$, from which it can identify the traitor.

\begin{algorithm}[htbp]
\caption{$\texttt{Merkle-Sign}$ framework for centralized FL.}
\label{algorithm:centralized}
\begin{algorithmic}[1]
\REQUIRE A one-time watermarking scheme $\texttt{WM}$, security parameters $N$ and $L$, hash functions $\texttt{hash}_{1}$ and $\texttt{hash}_{2}$.
\ENSURE A watermarked model $M_{\mathcal{A}}$ and evidence.
\STATE $\mathcal{A}$ generates $\texttt{key}_{0}\leftarrow\texttt{Gen}(1^{N})$.
\STATE Each $u_{k}\in\mathcal{U}$ generates $\texttt{key}_{k}\leftarrow\texttt{Gen}(1^{N})$, submits it and $\texttt{hash}_{1}(\texttt{key}_{k}|u_{k}) $ to $\mathcal{A}$.
\STATE For each $u_{k}$, $\mathcal{A}$ generates $\texttt{key}_{k}^{\dag}\leftarrow\texttt{Gen}(1^{N})$.
\STATE $\mathcal{A}$ initializes a clean model.
\WHILE {not terminate}
\STATE For each $u_{k}$, $\mathcal{A}$ embeds $\texttt{KEYs}_{k}=(\texttt{key}_{0}, \texttt{key}_{k}^{\dag})$ into the current model as $M_{\text{WM},k}$, obtains $\texttt{VERs}_{k}=(\texttt{verify}_{0},\texttt{verify}_{k}^{\dag})$.
\STATE $\mathcal{A}$ signs and broadcasts:
$$\langle \texttt{time}\|\texttt{Merkle}(\texttt{KEYs}_{k},\texttt{VERs}_{k},\texttt{info}) \rangle.$$
\STATE $\mathcal{A}$ transmits $M_{\text{WM},k}$ to $u_{k}$.
\STATE Each $u_{k}$ uploads the encrypted gradient to $\mathcal{A}$.
\STATE $\mathcal{A}$ averages the gradients and updates the clean model.
\ENDWHILE
\STATE $\mathcal{A}$ embeds $\texttt{KEYs}=\left\{\texttt{key}_{k} \right\}_{k=0}^{K}$ into the final model $M_{\mathcal{A}}$, obtains $\texttt{VERs}=\left\{\texttt{verify}_{k}\right\}_{k=0}^{K}$.
\STATE $\mathcal{A}$ signs and broadcasts:
$$\langle \texttt{time}\|\texttt{Merkle}(\texttt{KEYs},\texttt{VERs},\texttt{info}) \rangle.$$
\STATE $\mathcal{A}$ sends the intermedia of $\texttt{Merkle}(\texttt{KEYs},\texttt{VERs},\texttt{info})$ and $\texttt{verify}_{k}$ to $u_{k}$.
\STATE $\mathcal{A}$ publishes $M_{\mathcal{A}}$.
\end{algorithmic}
\label{exp:1}
\end{algorithm}

To reduce the communication traffic between the FL participants and the verification community while enabling the independence and recovery property, we adopt the Merkle-tree combinator.
A hash function $\texttt{hash}_{1}$ and a collision resistant hash function $\texttt{hash}_{2}$ are involved in building the Merkle-tree.
$\texttt{hash}_{1}$ maps $\mathcal{K}$, any legal $\texttt{verify}$ module or $\texttt{info}$ into $\left\{0,1\right\}^{r}$.
$\texttt{hash}_{2}$ maps $\left\{0,1\right\}^{2r}$ into $\left\{0,1\right\}^{r}$.
The \texttt{Merkle} operator maps each component of its input by $\texttt{hash}_{1}$ then organizes the mapped values into a binary tree with $\texttt{hash}_{2}$ as the reduction operator.
Concretely, $\texttt{hash}_{1}$ for $\texttt{key}_{k}$ in the seventh and the thirteenth step in Algo.~\ref{algorithm:centralized} is instantialized as a digital signature scheme:
$$\texttt{Enc}_{k}(\texttt{hash}_{0}(\cdot)),$$
where $\texttt{Enc}_{k}$ is an encryption module using $u_{k}$'s private key, and $\texttt{hash}_{0}$ is a preimage resistant hash function, so:
\begin{itemize}
\item Each $u_{k}$ can examine whether its key has been correctly embedded into the final broadcast by reconstructing the Merkle-tree.
\item Another author $u_{s}$ cannot infer $u_{k}$'s secret key given the computational hardness of public-key encryption and the one-wayness of $\texttt{hash}_{0}$.
\item A third party can examine whether a given $\texttt{key}_{k}$ corresponds to $w=\texttt{hash}_{1}(\texttt{key}_{k})$, which are part of the evidence submitted by a author.
It decrypts $w$ using $u_{k}$'s public key and compares the plaintext with $\texttt{hash}_{0}(\texttt{key}_{k})$.
\end{itemize}
Meanwhile, $\texttt{hash}_{1}$ for the survelliance keys and all verifier modules in Algo.~\ref{algorithm:centralized} is instantialized as an encryption module using $\mathcal{A}$'s private key combined with $\texttt{hash}_{0}$.
Since it is $\mathcal{A}$'s responsibility to trace the traitor, $u_{k}$ should not be able to infer $\texttt{key}_{k}^{\dag}$, with which it can spoil the surveillance key and pirate the model.
This inference is identical to forging a new tag from a file's digital signature, which is assumed to be difficult.

\begin{figure}[htbp]
\centering
\includegraphics[width=8cm]{./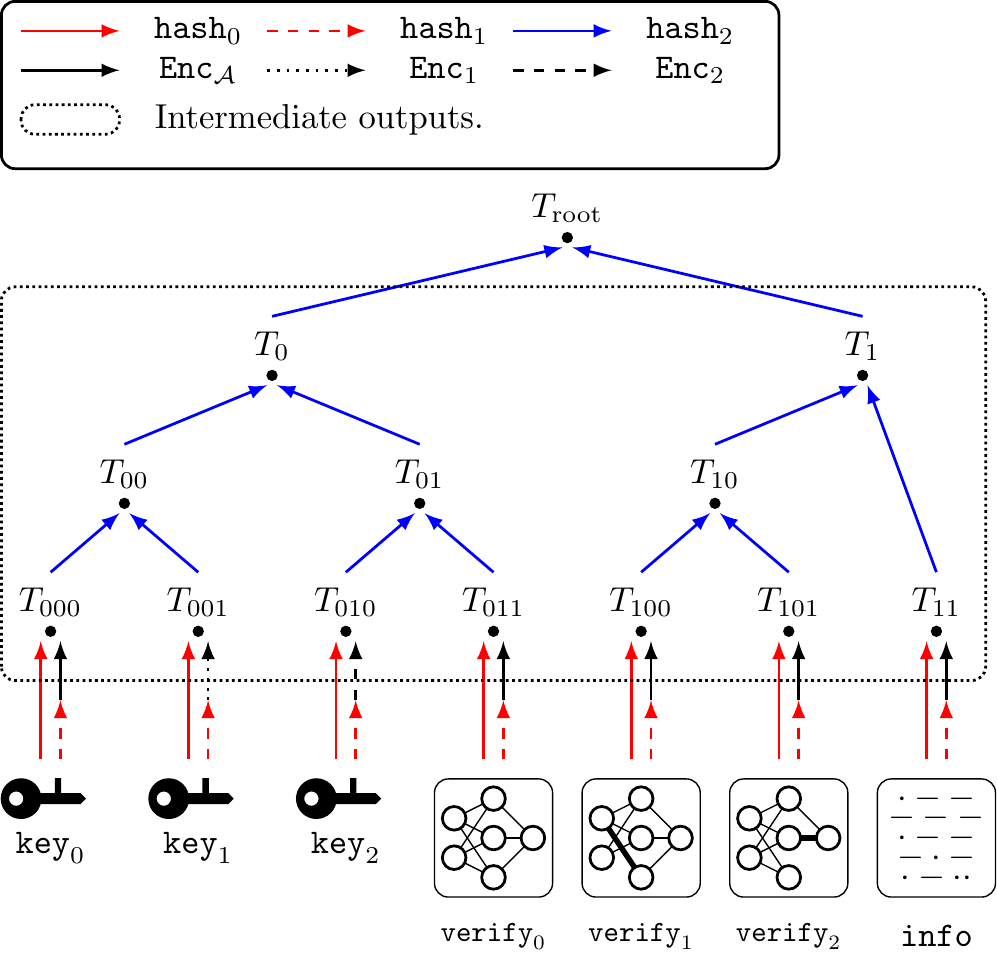}
\caption{$\texttt{Merkle}$ on three keys.}
\label{figure:4}
\end{figure}

The output of $\texttt{Merkle}$ is the value of its root node $T_{\text{root}}\in\left\{0,1\right\}^{r}$, the intermedia are the values of all the remaining nodes.
An instance for $\texttt{Merkle}$ on three keys is illustrated in Fig.~\ref{figure:4}.
The underlying setting is $K=2$ authors.
To justify the ownership, an author, e.g., $u_{1}$ submits $\texttt{key}_{1}$, $\texttt{verify}_{1}$, $\texttt{info}$ and the necessary information for deriving $T_{\text{root}}$, i.e., $T_{\text{001}}$, $T_{\text{100}}$, $T_{\text{11}}$, $T_{\text{000}}$, $T_{\text{01}}$, and $T_{\text{101}}$ to the verification community.
Any member of the community, given the public key of $u_{k}$ and the access to $\texttt{hash}_{0}$, $\texttt{hash}_{2}$ can independently examine whether the submitted evidence is consistent with the suspicious model and the root node.

This scheme is more space-friendly since the length of the root of the Merkle-tree is a constant $r$.
Moreover, the traffic burden for verification is replaced by the identity proof cost, as only an extra amount of information that is the same order as the height of the Merkle-tree, i.e., $\mathcal{O}(\ln K)$, needs to be transmitted.
If the aggregator simply hashes the keys of all authors as one string then independent verification is impossible.
While hashing and broadcasting each key multiples the communication cost by a factor $\mathcal{O}(K)$.

The basic security of this framework can be reduced to the security of the underlying watermarking scheme.
The functionality-preserving property, and the capacity of the model brace the performance of the watermarked model.
The security against tuning and piracy, and the covertness for each key ensure the robustness of each author's ownership.
We now proceed to analyze four additional requirements in the FL setting.

\subsubsection{Independency}
Since the aggregator has embedded $u_{k}$'s key set $\mathcal{K}_{k}$ into $M_{\mathcal{A}}$, transmitted $\texttt{verify}_{k}$ and the intermediate outputs of runing $\texttt{Merkle}$ to $u_{k}$, $u_{k}$ can verify its ownership over $M_{\mathcal{A}}$ independently.
To prove its ownership, $u_{k}$ only needs to submit $(\texttt{key}_{k},\texttt{verify}_{k})$ and a list of hashed strings to the public, from which $T_{\text{root}}$ can be correctly computed, and the ownership is proven.
Such verification does not involve other co-authors or the aggregator server.

\subsubsection{Privacy-preserving}
\label{section:3.3.3}
The privacy-preserving property can be formulated as the following theorem:
\begin{theorem}
\label{theorem:1}
Under the $\texttt{Merkle-Sign}$ framework, the probability that $u_{k}$ succeeds in forging $u_{s}$'s key is negligible.
\end{theorem}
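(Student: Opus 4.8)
The plan is to recast the forgery event as a break of one of the cryptographic primitives out of which $\texttt{hash}_1$ is built, and then exhibit a reduction. Recall that for an author's key the framework sets $\texttt{hash}_1(\texttt{key}_s|u_s)=\texttt{Enc}_s(\texttt{hash}_0(\texttt{key}_s|u_s))$, i.e. a digital signature of the digest $\texttt{hash}_0(\texttt{key}_s|u_s)$ under $u_s$'s private key, and that the only information about $u_s$ that $u_k$ ever observes is this committed leaf $w_s$ (through the signed broadcast $\langle\texttt{time}\|\texttt{Merkle}(\cdots)\rangle$ and the intermedia it receives). First I would pin down precisely what it means for $u_k$ to \emph{succeed in forging}: $u_k$ must output a pair $(\texttt{key}',w')$ that the verification community accepts on $u_s$'s behalf, that is, (i) $w'$ is a leaf that reconstructs to the broadcast $T_{\text{root}}$, and (ii) decrypting $w'$ with $u_s$'s \emph{public} key yields $\texttt{hash}_0(\texttt{key}')$.

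Next I would use the collision resistance of $\texttt{hash}_2$ to argue that any $w'$ satisfying (i) must coincide with the genuine committed leaf $w_s$; otherwise $u_k$ has exhibited two distinct leaf assignments hashing to the same root, a $\texttt{hash}_2$-collision, which occurs only with negligible probability. The forgery therefore reduces to producing $\texttt{key}'$ with $\texttt{Dec}_{pk_s}(w_s)=\texttt{hash}_0(\texttt{key}')$, i.e. $\texttt{hash}_0(\texttt{key}')=\texttt{hash}_0(\texttt{key}_s|u_s)$. I would then split on whether $\texttt{key}'$ equals the victim's key. If $\texttt{key}'\neq\texttt{key}_s|u_s$, then $u_k$ has found a second preimage of $\texttt{hash}_0$, contradicting its preimage/collision resistance. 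If $\texttt{key}'=\texttt{key}_s|u_s$, then $u_k$ has recovered $u_s$'s secret key from $w_s$; but public-key decryption of $w_s$ discloses only the digest $\texttt{hash}_0(\texttt{key}_s|u_s)$, so recovering $\texttt{key}_s$ again inverts $\texttt{hash}_0$ and breaks its one-wayness.

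A complementary reduction handles the \emph{fresh-signature} route, pertinent in the decentralized setting where the root is not yet fixed: if $u_k$ instead fabricates a brand-new valid pair $(\texttt{hash}_0(\texttt{key}'),w')$ under $u_s$'s signing key, I would build a simulator $\mathcal{B}$ that plants an external existential-unforgeability challenge as $u_s$'s verification key, answers $u_k$'s protocol messages with the signing oracle, and outputs $(\texttt{hash}_0(\texttt{key}'),w')$ as its own forgery. A successful $u_k$ then yields an existential signature forgery with essentially the same advantage. Combining the cases by a union bound gives $\text{Pr}\{u_k\text{ forges }u_s\}\le \mathbf{Adv}^{\text{cr}}_{\texttt{hash}_2}+\mathbf{Adv}^{\text{ow}}_{\texttt{hash}_0}+\mathbf{Adv}^{\text{euf}}_{\texttt{Enc}}$, each term negligible in $N$.

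I expect the main obstacle to lie not in any single reduction but in establishing that the scenarios are \emph{exhaustive}: one must show that every accepting transcript on $u_s$'s behalf either reuses the committed leaf (forcing a $\texttt{hash}_0$ second-preimage or a key recovery) or introduces a new signature (forcing an existential forgery), with no fourth avenue that sidesteps all three primitives at once. Making this airtight requires being careful about exactly which quantities are bound by the signed broadcast and which are chosen freely at verification time, so that the Merkle commitment genuinely pins $w'$ to $w_s$ before the case split is invoked.
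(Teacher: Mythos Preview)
Your argument is sound, but it is organized quite differently from the paper's. The paper gives a single reduction: it builds an inverter $\mathcal{A}_{\text{invert}}$ for $\texttt{hash}_{0}$ that simulates the entire $\texttt{Merkle-Sign}$ environment (generating all public/private key pairs itself), plants the external preimage challenge $y$ as the digest at $u_{s}$'s leaf, hands the resulting intermedia to the falsifier $\mathcal{A}_{\text{falsify}}$, and outputs whatever the falsifier returns. The paper then asserts that $\mathcal{A}_{\text{invert}}$ succeeds iff $\mathcal{A}_{\text{falsify}}$ does, and concludes from the preimage resistance of $\texttt{hash}_{0}$ alone.

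You instead separate three failure modes (a $\texttt{hash}_{2}$ collision at the Merkle layer, a $\texttt{hash}_{0}$ second-preimage/inversion at the leaf, and an existential forgery on $\texttt{Enc}_{s}$) and union-bound them into an explicit advantage expression. This buys you two things the paper's proof leaves implicit: first, a justification for why the forged leaf $w'$ must coincide with the genuine $w_{s}$ before one can talk about inverting $\texttt{hash}_{0}$ at all; second, coverage of the case where $u_{k}$ manufactures a fresh valid signature rather than reusing $w_{s}$. The paper's ``iff'' quietly assumes both of these away, since in its simulation the inverter holds every private key and fixes the root, so the only degree of freedom left to $\mathcal{A}_{\text{falsify}}$ is the preimage of the planted digest. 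Your decomposition is therefore more explicit about which primitive is doing which work, at the cost of a longer proof; the paper's single reduction is terser but relies on the reader to see that the simulated environment collapses all other attack avenues.
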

\begin{proof}
We prove this statement by reduction.
If an author $u_{k}$ can use a PPT algorithm $\mathcal{A}_{\text{falsify}}$ to generate a legal key and falsify itself as another author $u_{s}$ then a PPT algorithm $\mathcal{A}_{\text{invert}}$ that inverts $\texttt{hash}_{0}$ can be built as in Algo.~\ref{algorithm:pp}.
\begin{algorithm}[htbp]
\caption{PPT $\mathcal{A}_{\text{invert}}$ that inverts $\texttt{hash}_{0}$.}
\label{algorithm:pp}
\begin{algorithmic}[1]
\REQUIRE PPT algorithm $\mathcal{A}_{\text{falsify}}$, with which $u_{k}$ can falsify itself as $u_{s}$ with non-negligible probability, $y=\texttt{hash}_{0}(x)$.
\ENSURE $\tilde{x}$ such that $\texttt{hash}_{0}(\tilde{x})=y$.
\STATE $\mathcal{A}_{\text{invert}}$ generates and distributed public and private keys for all authors.
\STATE $\mathcal{A}_{\text{invert}}$ receives the key set from $u_{k}$ running $\mathcal{A}_{\text{falsify}}$.
\STATE $\mathcal{A}_{\text{invert}}$ simulates Algo.~\ref{algorithm:centralized}, sets $\texttt{hash}_{0}(\texttt{key}_{s})$ as $y$ and builds a Merkle-tree.
\STATE $\mathcal{A}_{\text{invert}}$ runs $\mathcal{A}_{\text{falsify}}$ on the intermedia of this Merkle-tree.
\STATE $\mathcal{A}_{\text{invert}}$ returns whatever $\mathcal{A}_{\text{falsify}}$ returns.
\end{algorithmic}
\end{algorithm}

In Algo.~\ref{algorithm:pp}, the environment in which $\mathcal{A}_{\text{falsity}}$ operates is identical to that of $u_{k}$ who aims to breach the privacy of $u_{s}$.
$\mathcal{A}_{\text{invert}}$ suceeds in inverting $\texttt{hash}_{0}(x)$ iff $\mathcal{A}_{\text{falsify}}$ successfully helps $u_{k}$ in falsification, so the probabilities of both events are identical.
The preimage resistance assumption of $\texttt{hash}_{0}$ indicates that such probability is negligible, hence an effective $\mathcal{A}_{\text{falsify}}$ does not exist.
\end{proof}

\subsubsection{Recovery}
Consider the case where an adversary eavesdropping all $L$ times of $u_{k}$'s verifications and spoiling all keys.
In this case, $u_{k}$ can only prove that it has successfully inverted many nodes within the Merkle-tree, yet it is unrelated to the pirated model.
To recover its ownership, $u_{k}$ requests its neighbor w.r.t. the Merkle-tree to submit its evidence to the verification community.
For example, in Figure.~\ref{figure:4}, if $u_{1}$ has been spoiled from the model then $u_{1}$ can ask $\mathcal{A}$ to submit its key and verifier.
Given this evidence, $\mathcal{A}$ proves that it is a legal owner of the suspicious DNN model.
Then $u_{1}$'s ownership can be recovered given the fact that its evidence $(\texttt{key}_{1},\texttt{verify}_{1})$ is consistent with $\mathcal{A}$'s information within the Merkle-tree, in particular $T_{\text{001}}$ and $T_{\text{100}}$.
Therefore $u_{1}$ must have been incorporated as $\mathcal{A}$'s co-author before $\mathcal{A}$ broadcasting the message as in the thirteenth step in Algo.~\ref{algorithm:centralized}.
Each recovery involves proving one extra key that is consistent with the Merkle-tree shared by the victim's co-author, so only one co-author/aggregator is involved.
The recovery property is not contradictive to the privacy-preserving property as all keys are assumed to be one-time against potential spoil attacks.
Spoiling all keys requires an adversary to eavesdrop on all potential co-authors, which is extremely hard.
The $\texttt{Merkle-Sign}$ is secure against adversarial piracy even if recovery verification is allowed.
\begin{theorem}
\label{theorem:2}
Under the Merkle-Sign framework, the probability of pirating a published model is negligible.
\end{theorem}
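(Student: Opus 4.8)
The plan is to prove the statement by reduction, mirroring the structure of Theorem~\ref{theorem:1} but now collecting together \emph{all} of the cryptographic primitives that $\texttt{Merkle-Sign}$ relies upon. First I would fix what it means for a PPT adversary $\mathcal{A}_{\text{pirate}}$ to pirate the published model $M_{\mathcal{A}}$: it must output a pair $(\texttt{key}^{*},\texttt{verify}^{*})$ together with an authentication path that the verification community accepts against $M_{\mathcal{A}}$. By construction, acceptance requires three things simultaneously --- (i) the submitted path reduces under $\texttt{hash}_{2}$ to a root $T_{\text{root}}$ whose signed broadcast $\langle\texttt{time}\|T_{\text{root}}\rangle$ predates the publication of $M_{\mathcal{A}}$, which is enforced by the time-stamp ordering underlying the security against overwriting; (ii) $\texttt{verify}^{*}(M_{\mathcal{A}},\texttt{key}^{*})=1$; and (iii) the signature check $\texttt{Enc}_{k}(\texttt{hash}_{0}(\texttt{key}^{*}))$ matches the corresponding leaf value. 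The proof then argues that the event that all three hold has negligible probability.

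The main body is an exhaustive, disjoint case analysis on the accepted leaf. In the first case the leaf coincides with one of the genuine, previously signed leaves of an honest author $u_{k}$ (or of $\mathcal{A}$); then $\mathcal{A}_{\text{pirate}}$ must have produced a $\texttt{key}^{*}$ consistent with a published $\texttt{hash}_{1}(\texttt{key}_{k})=\texttt{Enc}_{k}(\texttt{hash}_{0}(\texttt{key}_{k}))$ without $u_{k}$'s private key. This splits further into inverting $\texttt{hash}_{0}$ --- for which I reuse almost verbatim the reduction of Algo.~\ref{algorithm:pp} to contradict preimage resistance --- or forging a fresh valid signature, which contradicts the unforgeability of $\texttt{Enc}_{k}$. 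In the second case the accepted leaf is new yet still collapses to the same root $T_{\text{root}}$; then two distinct inputs collide somewhere along the path, and I would extract a $\texttt{hash}_{2}$ collision, contradicting its collision resistance. In the third case $\mathcal{A}_{\text{pirate}}$ commits an entirely fresh tree with a new root, whose timestamp necessarily postdates publication, so condition (i) fails and the claim is rejected. The residual possibility --- that $\texttt{verify}^{*}(M_{\mathcal{A}},\texttt{key}^{*})=1$ for a $\texttt{key}^{*}$ the adversary never legitimately embedded --- is exactly the event ruled out by the security against piracy of the underlying one-time scheme $\texttt{WM}$, which is assumed. Summing negligible bounds over the constant number of cases keeps the total negligible.

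The step I expect to be the main obstacle is showing that \emph{allowing the recovery procedure does not hand the adversary extra power}, since the theorem insists on security even when recovery verification is enabled. The subtlety is that recovery lets an extra party inject a leaf into a live verification. I would handle this by observing that any leaf admitted through recovery is itself subject to conditions (i)--(iii): a recovering co-author must present a key whose signature matches a leaf already fixed in the pre-publication tree, so a successful recovery is again reducible to inverting $\texttt{hash}_{0}$, forging a signature, or finding a $\texttt{hash}_{2}$ collision. Consequently the recovery interface enlarges the adversary's transcript but not its winning probability, and the reduction can simulate these recovery queries exactly as Algo.~\ref{algorithm:pp} simulates the ordinary verification environment. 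The remaining care is bookkeeping: verifying that the cases are genuinely disjoint and jointly exhaustive, and that each reduction is PPT and preserves the non-negligible advantage, so that a successful $\mathcal{A}_{\text{pirate}}$ would break at least one stated assumption.
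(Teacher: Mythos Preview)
Your proposal is correct and, like the paper, proceeds by a case-based reduction to the underlying cryptographic primitives, so the overall strategy matches. The decomposition, however, is genuinely different. The paper partitions the adversary's options by \emph{attack strategy}: (1) broadcast a fake early-timestamped message and overwrite, (2) impersonate a participating author---splitting further into ``with'' versus ``without'' an eavesdropped proof---and (3) impersonate a spoiled author; it then dispatches (1) informally via the difficulty of guessing $\texttt{info}$, (2) via an explicit reduction $\mathcal{A}_{\text{collide}}$ to the collision resistance of $\texttt{hash}_{2}$ (and to $\texttt{hash}_{1}$ inversion when a proof was observed), and (3) again via $\texttt{hash}_{1}$ inversion. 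You instead partition by \emph{the accepted leaf} (genuine leaf, fresh leaf under the same root, fresh root), which makes the exhaustiveness check cleaner and lets you isolate the signature-unforgeability of $\texttt{Enc}_{k}$ and the piracy-security of the underlying $\texttt{WM}$ as explicit assumptions---both of which the paper leaves implicit. Your treatment of the recovery interface is also more careful than the paper's, which folds it into case (3) without separately arguing that recovery queries are simulatable. What the paper's organisation buys is a slightly shorter argument and a concrete collision-finding algorithm $\mathcal{A}_{\text{collide}}$ written out in pseudocode; what yours buys is a tighter correspondence between cases and primitives and a cleaner handling of the timestamp condition for fresh roots.
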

This theorem can be proven by reduction as for Theorem \ref{theorem:1}.
Details are left in the appendices.

\subsubsection{Traitor-tracing}
A traitor $u_{k}$ can publish the model distributed by $\mathcal{A}$ as its product before the training terminates.
Under our configuration, $\mathcal{K}_{0}$ and $\texttt{key}_{k}^{\dag}$ are intractable to $u_{k}$ following the discussion in Section~\ref{section:3.3.3} and the privacy-preserving of the underlying watermarking scheme so $u_{k}$ cannot spoil them from the distributed model.
Hence, $\mathcal{A}$ can always successfully declare ownership over the pirated model afterward.
After doing so, $\mathcal{A}$ can examine which of the surveillance key set is in the suspicious model $M$ to locate the traitor.
For example, when $A$ finds $\texttt{verify}_{k}^{\dag}$ combined with the suspicious model correctly recognizes $\texttt{key}_{k}^{\dag}$, i.e.:
$$\text{Pr}\left\{\texttt{verify}_{k}^{\dag}(M,\texttt{key}_{k}^{\dag})=1 \right\}\geq 1-\epsilon.$$
then $u_{k}$ is the traitor.

\subsection{Generating robust triggers for $\texttt{Merkle-Sign}$}
Current black-box watermarking schemes, especially backdoor-based schemes can hardly be directly adopted in $\texttt{Merkle-Sign}$ for two reasons:
\begin{itemize}
\item The definition of $\texttt{Gen}$ is ambigious.
The keyspace and the mapping from a single key to triggers are unclear.
\item Most triggers lie in the same domain that is accessible for the adversary (e.g., triggers in~\cite{zhang2018protecting,zhu2020secure} are white noise). 
Therefore, spoiling one backdoor-back watermark brings potential harm to other watermarks. 
An adversary can spoil multiple watermarks once for all.
\end{itemize}
To cope with these problems and accommodate $\texttt{Merkle-Sign}$ to the black-box setting, we propose a new scheme for trigger generation, Autoencoder-based Trigger Generator for Federated learning ($\texttt{ATGF}$).
We resort to the autoencoder structure~\cite{li2019prove} to generate triggers, whose intermediate embedding can encode an author's key.
Since the aggregator does not possess any data in advance and lacks enough pseudorandomness to generate triggers, $\texttt{ATGF}$ absorbs patterns from authors by running the FL for an extra round as Fig.~\ref{figure:atgf}.
\begin{figure}[htbp]
\centering
\includegraphics[width=8.5cm]{./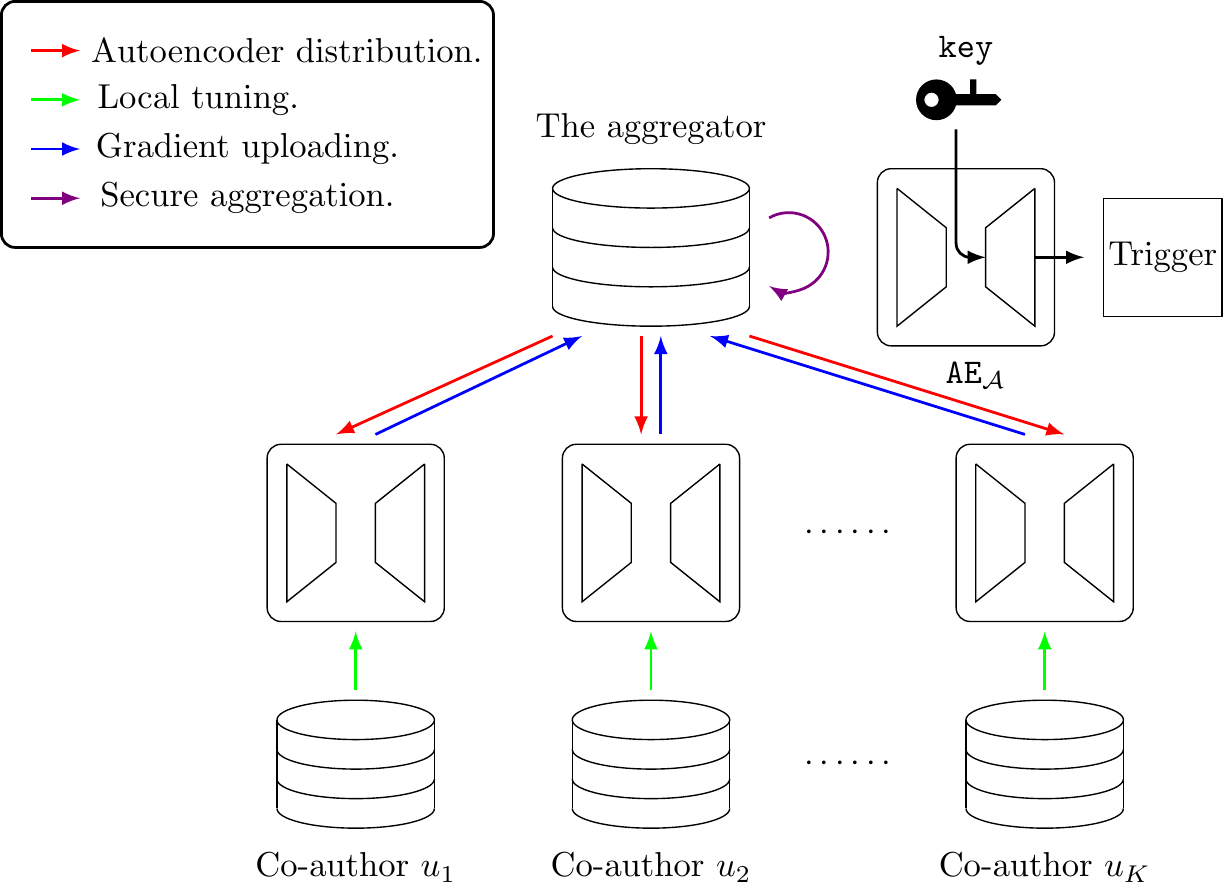}
\caption{Building the trigger generator in $\texttt{ATGF}$.}
\label{figure:atgf}
\end{figure}

Concretely, the aggregator $\mathcal{A}$ broadcasts an autoencoder structure to authors.
Each author $u_{k}$ then trains the autoencoder into $\texttt{AE}_{k}$ on its local dataset with the legitimate domain (for privacy concerns, this dataset is not necessarily that for the FL task).
Then the aggregator averages the numerical parametes across all autoencoders and obtains $\texttt{AE}_{\mathcal{A}}=\texttt{D}_{\mathcal{A}}\circ \texttt{E}_{\mathcal{A}}$.
To generate a trigger from $\texttt{key}$, we map $\texttt{key}$ into a numerical vector and feed it to $\texttt{AE}_{\mathcal{A}}$'s decoder.
The corresponding label is obtained by running another pseudorandom mapping.

The pseudorandomness within $\texttt{AE}_{\mathcal{A}}$ is attributed to all participating authors. 
An adversary oblivious to any evidence cannot infer the patterns of triggers from scratch. 
Meanwhile, since $\texttt{AE}_{\mathcal{A}}$ is an average of autoencoders trained on distinctive datasets, its decoder's output on different keys are subject to a mixed and complex distribution. 
Therefore, the spoil attack against one specific trigger in $\texttt{ATGF}$ is not going to harm other triggers significantly. 
Such decoupling between watermarks of different authors is a necessary prerequisite for the independency, recovery, and traitor-tracing properties for $\texttt{Merkle-Sign}$ against the adaptive spoil attack. 

\section{Experiment and Discussions}
\label{section:5}
\subsection{Settings}
To evaluate the performance of $\texttt{Merkle-Sign}$ and the applicability of current watermarking schemes in the FL scenario, we selected five tasks and four DNN structures for comparison.

We experimented with datasets across many domains.
Apart from MNIST~\cite{deng2012mnist} where many studies on the security on FL have been tested, we included Fashion~\cite{xiao2017/online}, a more complicated dataset, CIFAR10, and CIFAR100~\cite{krizhevsky2009learning}, two real-world large scale datasets.
Four candidate DNN structures were adopted as the backbone classifiers: an elementary multiple-layer perceptron (MLP) with cross-entropy loss, a shallow convolutional neural network (CNN) with five layers, ResNet-18, and ResNet-50~\cite{he2016deep}.
All experiments were conducted using $\texttt{PyTorch}$ framework.
In all scenarios, we adopted Adam optimizer with a three-stage learning rate decreasing schedule~\footnote{The source code will be available at \url{https://github.com/A_Temporary_User/MklSign.}}

\begin{table*}[htb]
\caption{Evaluation of watermarking capacity, the maximal number of correctly embedded watermarks when the classification error rate doubles.
$\texttt{WF}$ represents $\texttt{Wonder Filter}$, $\texttt{M-S}$ represents $\texttt{MTL-Sign}$.}
\centering
\begin{minipage}{\textwidth}
\begin{minipage}[t]{0.5\textwidth}
\centering
\begin{tabular}{m{1.3cm}<{\centering}|m{1.1cm}<{\centering}|m{1cm}<{\centering}|m{0.7cm}<{\centering}|m{0.8cm}<{\centering}|m{0.5cm}<{\centering}|m{0.9cm}<{\centering}}
\toprule
\multirow{2}{*}{\tabincell{c}{ \textbf{Dataset}}} & \multicolumn{6}{c}{MLP} \\
\cline{2-7}
& Accuracy & Uchida's & $\texttt{ATGF}$ & Rand & $\texttt{WF}$ & $\texttt{M-S}$ \\
\midrule[1pt]
MNIST & 91.92\% & 36 & 23 & 17 & 43 & 65 \\
\midrule
\tabincell{c}{Fashion} & 82.90\% & 45 & 75 & 50 & 66 & 95 \\
\midrule
CIFAR10 & 25.92\% & --- & --- & --- & --- & --- \\
\midrule
CIFAR100 & 13.26\% & --- & --- & --- & --- & --- \\
\bottomrule
\multicolumn{7}{c}{} \\

\end{tabular}
\end{minipage}
\begin{minipage}[t]{0.5\textwidth}
\centering
\begin{tabular}{m{1.3cm}<{\centering}|m{1.1cm}<{\centering}|m{1cm}<{\centering}|m{0.7cm}<{\centering}|m{0.8cm}<{\centering}|m{0.5cm}<{\centering}|m{0.9cm}<{\centering}}
\toprule
\multirow{2}{*}{\tabincell{c}{\textbf{Dataset}}} & \multicolumn{6}{c}{Shallow CNN} \\
\cline{2-7}
& Accuracy & Uchida's & $\texttt{ATGF}$ & Rand & $\texttt{WF}$ & $\texttt{M-S}$ \\
\midrule[1pt]
MNIST & 97.71\% & 110 & 12 & 13 & 17 & 55 \\
\midrule
\tabincell{c}{Fashion} & 85.40\% & 90 & 17 & 19 & 36 & 65 \\
\midrule
CIFAR10 & 61.45\% & 8 & 7 & 7 & 33 & 90 \\
\midrule
CIFAR100 & 30.03\% & --- & --- & --- & --- & --- \\
\bottomrule
\multicolumn{7}{c}{} \\
\end{tabular}
\end{minipage}

\begin{minipage}[t]{0.5\textwidth}
\centering
\begin{tabular}{m{1.3cm}<{\centering}|m{1.1cm}<{\centering}|m{1cm}<{\centering}|m{0.7cm}<{\centering}|m{0.8cm}<{\centering}|m{0.5cm}<{\centering}|m{0.9cm}<{\centering}}
\toprule
\multirow{2}{*}{\tabincell{c}{\textbf{Dataset}}} & \multicolumn{6}{c}{ResNet-18} \\
\cline{2-7}
& Accuracy & Uchida's & $\texttt{ATGF}$ & Rand & $\texttt{WF}$ & $\texttt{M-S}$ \\
\midrule[1pt]
MNIST & 99.60\% & 8,750 & 350 & 373 & 453 & 9,503 \\
\midrule
\tabincell{c}{Fashion} & 93.81\% & $\geq$10,000 & 519 & 513 & 588 & $\geq$10,000 \\
\midrule
CIFAR10 & 89.10\% & $\geq$10,000 & 582 & 572 & 663 & $\geq$10,000 \\
\midrule
CIFAR100 & 62.59\% & $\geq$10,000 & 612 & 610 & 797 & $\geq$10,000 \\
\bottomrule
\end{tabular}
\end{minipage}
\begin{minipage}[t]{0.5\textwidth}
\centering
\begin{tabular}{m{1.3cm}<{\centering}|m{1.1cm}<{\centering}|m{1cm}<{\centering}|m{0.7cm}<{\centering}|m{0.8cm}<{\centering}|m{0.5cm}<{\centering}|m{0.9cm}<{\centering}}
\toprule
\multirow{2}{*}{\tabincell{c}{\textbf{Dataset}}} & \multicolumn{6}{c}{ResNet-50} \\
\cline{2-7}
& Accuracy & Uchida's & $\texttt{ATGF}$ & Rand & $\texttt{WF}$ & $\texttt{M-S}$ \\
\midrule[1pt]
MNIST & 99.72\% & $\geq$10,000 & 417 & 411 & 494 & $\geq$10,000 \\
\midrule
\tabincell{c}{Fashion} & 95.25\% & $\geq$10,000 & 580 & 540 & 669 & $\geq$10,000 \\
\midrule
CIFAR10 & 91.50\% & $\geq$10,000 & 600 & 612 & 773 & $\geq$10,000 \\
\midrule
CIFAR100 & 67.70\% & $\geq$10,000 & 710 & 712 & 779 & $\geq$10,000 \\
\bottomrule
\end{tabular}
\end{minipage}
\end{minipage}
\label{table:5}
\end{table*}
\subsection{Revisiting watermarking schemes}
To choose the appropriate watermarking scheme for $\texttt{Merkle-Sign}$, we evaluated five candidates w.r.t. the two requirements in Section~\ref{section:2.3}, including: Uchida's~\cite{uchida2017embedding}, random trigger~\cite{zhang2018protecting,zhu2020secure}, $\texttt{Wonder}$ $\texttt{Filter}$ ($\texttt{WF}$)~\cite{li2019persistent}, $\texttt{MTL-Sign}$~\cite{ours}, and $\texttt{ATGF}$.
The scheme in~\cite{guan2020reversible} is insecure against tuning, while $\texttt{Deep-Sign}$~\cite{darvish2019deepsigns} is slow regarding watermark embedding.

In Uchida's, the key generation process selects $N_{\text{Uchida}}=20$ parameters from the DNN model architecture and $N_{\text{Uchida}}$ digits in the range $[-0.5,0.5]$ as the watermark.
Watermark embedding for Uchida's involved replacing these $N_{\text{Uchida}}$ parameters with the chosen digits.
For the two established backdoor-based schemes, the key generation process randomly selects patterns as the triggers and assigns them with randomly chosen labels as in~\cite{zhu2020secure}.
In the random trigger scheme, a random pattern with the same size as the image was generated as the key.
In $\texttt{WF}$, we generated a random stamp and set its pixel value into $\pm 2000$.
Watermark embedding is tantamount to having the DNN model learn the backdoor triggers.
For $\texttt{MTL-Sign}$, the key generation process selected a random set of $N_{\text{MTL}}=20$ digits from $[1,N_{\text{MTL}}^{3}]$, mapped them into images as QRcode, and assigned them with binary labels.
Watermark embedding for $\texttt{MTL-Sign}$ involves minimizing the binary classification loss on a watermarking backend classifier, whose inputs are the intermediate outputs of the backbone classifier.
As for $\texttt{ATGF}$, we adopted an autoencoder with $\texttt{ReLU}$ activation functions whose intermediate number of neurons are: 784, 256, 32, 256, 784.
For each author, five triggers were generated using $\texttt{ATGF}$.

\subsubsection{Capacity}
We computed the watermark capacity of the five baseline models as defined by~\eqref{equation:capacity}.
For a given model and a given dataset, the threshold of performance decline $\delta$ is set as the classification error rate of the clean model.
That is to say, if the clean model achieves classification accuracy of $(1-\delta)$ then we continue embedding watermarks into it until the accuracy declines to $(1-2\delta)$.
The results are collected in Table~\ref{table:5}, the maximal capacity was manually set to 10,000.
We observed that complicated models have a larger capacity as expected, while simple models can hardly fit real-world datasets apart from some vanilla cases.
Therefore, in the following experiments, we only adopted ResNet-50 as the backbone classifier.
The weight-based and hybrid schemes Uchida's and $\texttt{MTL-Sign}$ had a larger capacity, suitable for FL and multiple-time verification.
This is because these two methods, unlike backdoor-based methods, have a less negative impact on the DNN model.
For ResNet-18 and ResNet-50, Uchida's only modified at most 4.6\% and 1.9\% of all parameters respectively.
Meanwhile, $\texttt{MTL-Sign}$, at its extreme configuration, has almost no impact on the model to be watermarked.
Among backdoor-based schemes, $\texttt{Wonder Filter}$ had the largest capacity.
The reason behind this is that the triggers adopted in $\texttt{Wonder Filter}$ deviate from ordinary images significantly.
As a result, these triggers have less influence on the DNN's ordinary functionality.

\subsubsection{Embedding efficiency}
We recorded the time of watermark embedding for the five watermarking schemes in ResNet-50.
The time consumptions for Uchida's, random trigger, $\texttt{WF}$, $\texttt{MTL-Sign}$ and $\texttt{ATGF}$ trigger are respectively: 20.6ms, 262.2ms, 268.7ms, 617.3 ms, and 253.2ms.
Since one training epoch for ResNet-50 on average took 44s to 200s, the burden of watermark embedding in any scheme is at most upper bounded by 1.4\% of training time and is uniformly negligible.

\begin{figure}[htbp]
\subfigure[MNIST, $K=50$.]{
\begin{minipage}[htbp]{0.5\linewidth}
\includegraphics[width=4cm]{./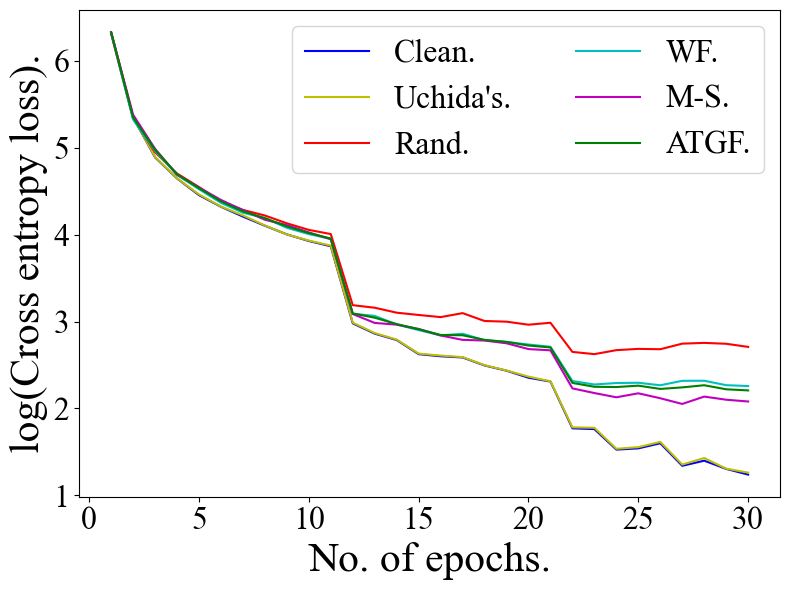}
\end{minipage}%
}%
\subfigure[CIFAR100, $K=50$.]{
\begin{minipage}[htbp]{0.5\linewidth}
\includegraphics[width=4cm]{./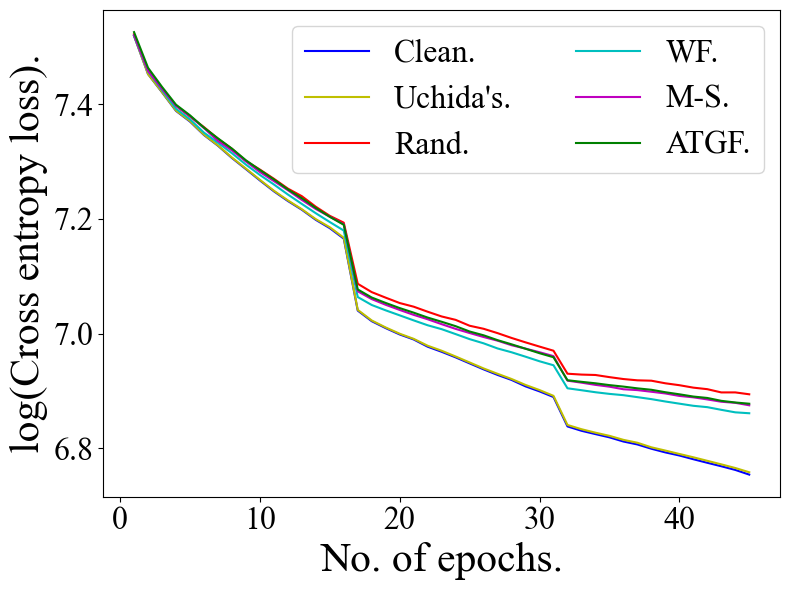}
\end{minipage}%
}%

\subfigure[MNIST, $K=100$.]{
\begin{minipage}[htbp]{0.5\linewidth}
\includegraphics[width=4cm]{./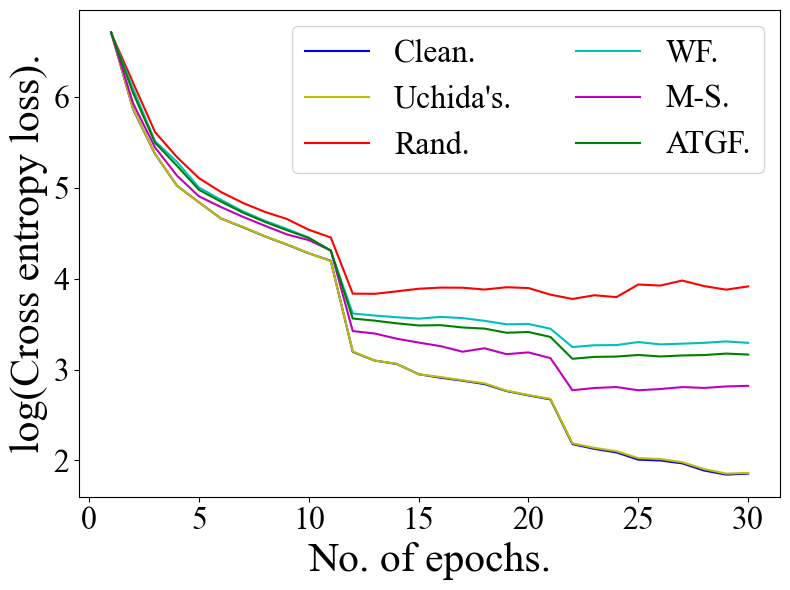}
\end{minipage}%
}%
\subfigure[CIFAR100, $K=100$.]{
\begin{minipage}[htbp]{0.5\linewidth}
\includegraphics[width=4cm]{./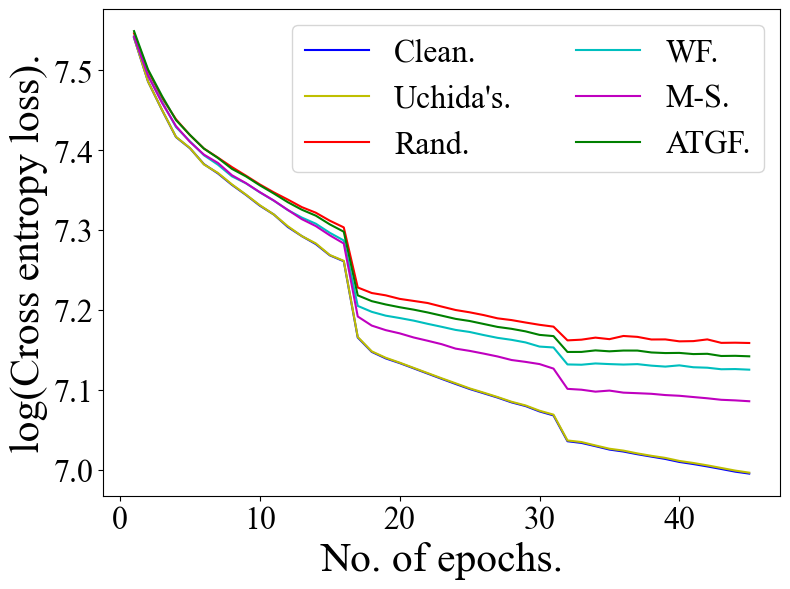}
\end{minipage}%
}%

\subfigure[MNIST, $K=200$.]{
\begin{minipage}[htbp]{0.5\linewidth}
\includegraphics[width=4cm]{./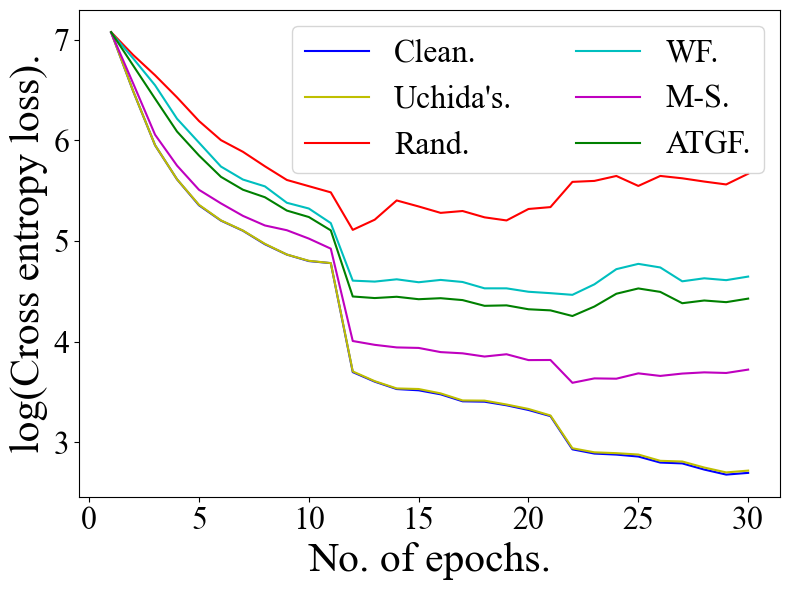}
\end{minipage}%
}%
\subfigure[CIFAR100, $K=200$.]{
\begin{minipage}[htbp]{0.5\linewidth}
\includegraphics[width=4cm]{./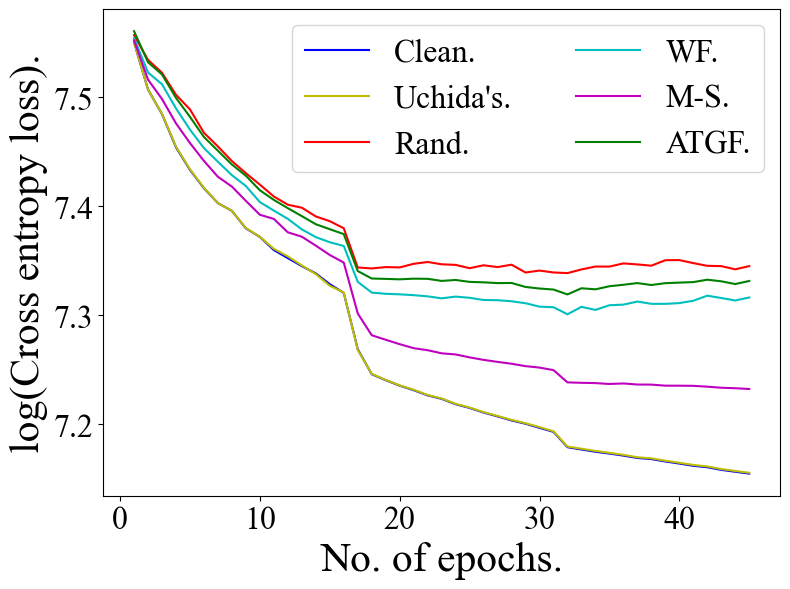}
\end{minipage}%
}%
\caption{The convergence of DNN models in FL under $\texttt{Merkle-Sign}$.}
\label{figure:dprimary}
\end{figure}

\subsection{Convergence of FL under $\texttt{Merkle-Sign}$}
Hitherto, the formal analysis in Section~\ref{section:4} has guaranteed the security of $\texttt{Merkle-Sign}$ for FL.
However, it remains unclear whether the DNN model can correctly converge or not when being watermarked intermediately.
A large capacity is necessary yet insufficient for model protection in FL.

To measure the impact of watermarking on the convergence of FL, we varied the number of authors $K$ in $\left\{50,100,200 \right\}$.
Model averaging in~\eqref{equation:ga} was adopted for aggregation in the centralized setting.
The decline of loss during the FL training is illustrated in Fig.~\ref{figure:dprimary}. 
The final classification accuracy of different configurations of $\texttt{Merkle-Sign}$ is demonstrated in Fig.~\ref{figure:table8}.
It can be observed that larger $K$ resulted in slower convergence, reflecting the trade-off between performance and security.
The scheme of Uchida's has the smallest impact on the convergence.
The impact of the other white-box scheme, $\texttt{MTL-Sign}$, is also very small.
For the three black-box/backdoor-based watermarking schemes, the convergence is at risk when $K$ approaches the watermarking capacity.
Such approaching implies that watermarking exert a large impact on the model and the normal training is disturbed.
However, in all cases, the decline of the final model's classification accuracy is upper bounded by 3.1\% according to Fig.~\ref{figure:table8} and is tolerable.
Therefore, applying $\texttt{Merkle-Sign}$ does not significantly threaten the FL system.

\begin{figure*}[htbp]
\subfigure[MNIST.]{
\begin{minipage}[htbp]{0.25\linewidth}
\includegraphics[width=4.5cm]{./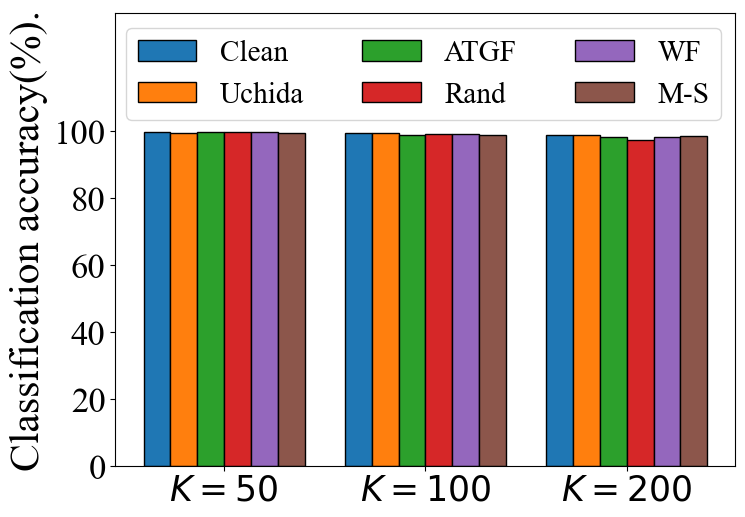}
\end{minipage}%
}%
\subfigure[Fashion.]{
\begin{minipage}[htbp]{0.25\linewidth}
\includegraphics[width=4.5cm]{./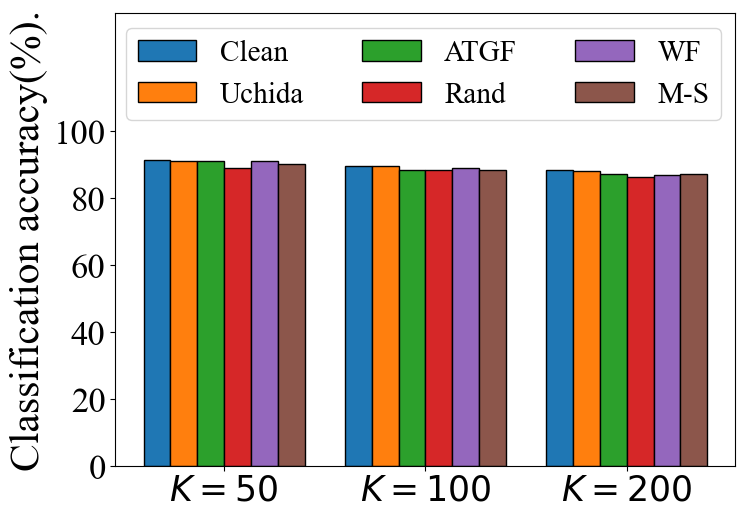}
\end{minipage}%
}%
\subfigure[CIFAR10.]{
\begin{minipage}[htbp]{0.25\linewidth}
\includegraphics[width=4.5cm]{./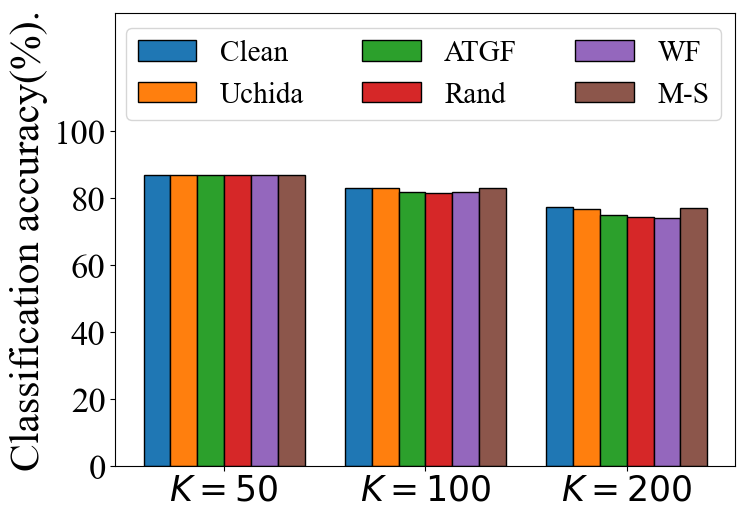}
\end{minipage}%
}%
\subfigure[CIFAR100.]{
\begin{minipage}[htbp]{0.25\linewidth}
\includegraphics[width=4.5cm]{./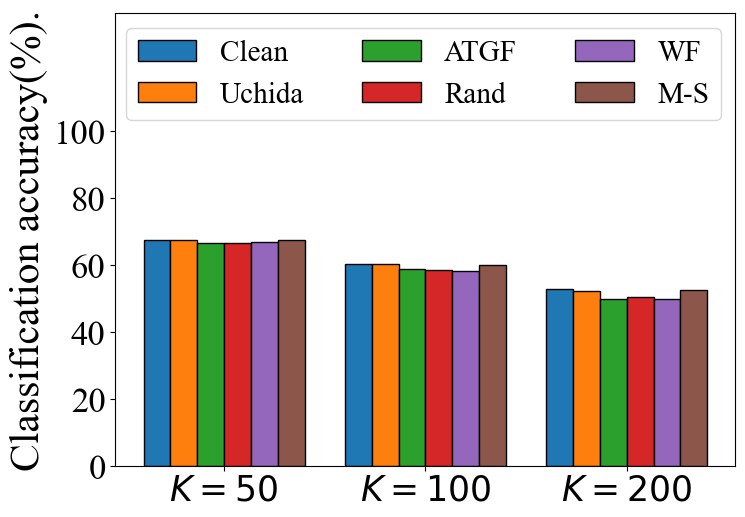}
\end{minipage}%
}%
\caption{The convergence of DNN models in FL under $\texttt{Merkle-Sign}$.}
\label{figure:table8}
\end{figure*}

\subsection{Security against the spoil attack}
Under $\texttt{Merkle-Sign}$, the cooperating authors are able to defend against the spoil attack by putting forward new evidence and reconstructing the Merkle-tree.
However, this defense might be breached due to the flaw within the watermarking scheme: the spoil attack against one watermark may erase other watermarks at the same time.
This phenomenon, especially evident in backdoor-based schemes, remains a threat to the recovery property.

To evaluate the framework's robustness against the spoil attack under different configurations, we measured the percentage of watermarks that can be correctly verified after the spoil attack against one watermark.
This metric reflects the synchronism between different watermarks within the model for a given watermarking scheme.
The lower this percentage is, the less synchronism exists between watermarks and the less effective the spoil attack is.

We adopted ResNet-50 for the evaluation, it was empirically observed that this metric was almost invariant under different datasets.
For Uchida's, the spoil attack simply replaced the watermarked parameters with random numbers.
For the spoil attack against $\texttt{MTL-Sign}$, we fixed the watermarking backend and tuned the DNN model until its watermarking branch failed to work.
For the backdoor-based schemes, we tuned the DNN model to fit arbitrary labels on the triggers.
The results are shown in Table~\ref{table:time}.
\begin{table}[htb]
\caption{Percentage of correctly verified watermarks (in \%). }
\begin{center}
\begin{tabular}{c|c|c|c|c|c}
\toprule
\multirow{2}{*}{\textbf{No. of authors}} & \multicolumn{2}{c|}{White-box scheme} & \multicolumn{3}{c}{Black-box scheme}\\
\cline{2-6}
& Uchida's &$\texttt{M-S}$ & Rand & $\texttt{WF}$ & $\texttt{ATGF}$ \\
\midrule[1pt]
{$K=50$} & 100 & 85 & 61 & 68 & 100\\
{$K=100$} & 100 & 80 & 66 & 64 & 98\\
{$K=200$} & 100 & 73 & 59 & 64 & 99\\
\bottomrule
\end{tabular}
\label{table:time}
\end{center}
\end{table}

We observe from Table~\ref{table:time} that Uchida's and $\texttt{ATGF}$ are more robust against the spoil attack since spoiling one watermark has little impact on others.
For schemes as the random trigger and $\texttt{WF}$, spoiling one single watermark can simultaneously invalidate many other watermarks.
Some instances of triggers are illustrated in Fig.~\ref{figure:triggers}. 
It can be observed that triggers generated from $\texttt{ATGF}$ were subject to a more diversified distribution. 
\begin{figure}[htbp]
\subfigure[$\texttt{ATGF}$.]{
\begin{minipage}[htbp]{0.25\linewidth}
\includegraphics[width=2.1cm]{./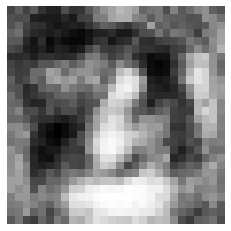}
\end{minipage}%
}%
\subfigure[$\texttt{ATGF}$.]{
\begin{minipage}[htbp]{0.25\linewidth}
\includegraphics[width=2.1cm]{./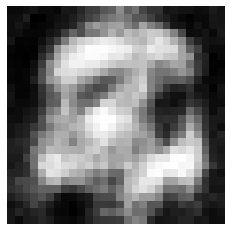}
\end{minipage}%
}%
\subfigure[$\texttt{ATGF}$.]{
\begin{minipage}[htbp]{0.25\linewidth}
\includegraphics[width=2.1cm]{./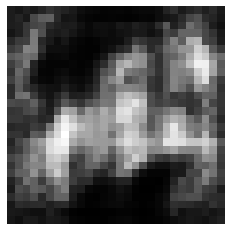}
\end{minipage}%
}%
\subfigure[$\texttt{ATGF}$.]{
\begin{minipage}[htbp]{0.25\linewidth}
\includegraphics[width=2.1cm]{./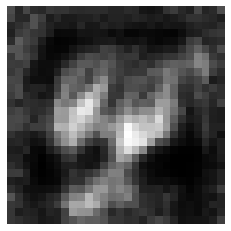}
\end{minipage}%
}%

\subfigure[Random.]{
\begin{minipage}[htbp]{0.25\linewidth}
\includegraphics[width=2.1cm]{./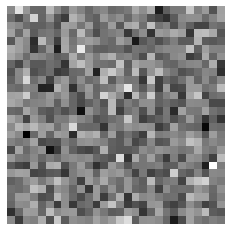}
\end{minipage}%
}%
\subfigure[Random.]{
\begin{minipage}[htbp]{0.25\linewidth}
\includegraphics[width=2.1cm]{./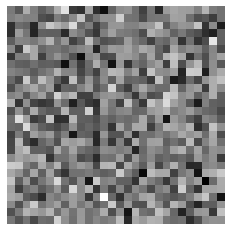}
\end{minipage}%
}%
\subfigure[$\texttt{WF}$.]{
\begin{minipage}[htbp]{0.25\linewidth}
\includegraphics[width=2.1cm]{./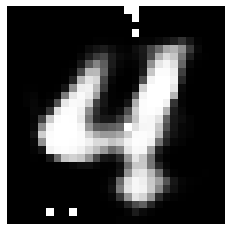}
\end{minipage}%
}%
\subfigure[$\texttt{WF}$.]{
\begin{minipage}[htbp]{0.25\linewidth}
\includegraphics[width=2.1cm]{./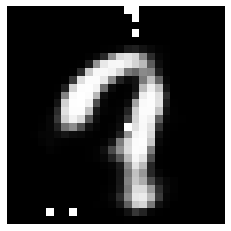}
\end{minipage}%
}%
\caption{Triggers generated from different watermarking schemes.}
\label{figure:triggers}
\end{figure}
Concretely, the spoil attack against one author in FL compromises the identity proof of 36\% and 41\% oblivious authors when $\texttt{Merkle-Sign}$ adopts these two schemes.
Under these configurations, the security of innocent authors and the recovery property are at risk.

Despite the privileges of Uchida's, it is also the scheme that is the easiest to spoil. 
The necessary time consumption for spoiling Uchida's, $\texttt{M-S}$, random trigger, $\texttt{WF}$, and $\texttt{ATGF}$ is 21ms, 750ms, 312ms, 320ms, and 303ms. 
White-box watermark schemes as Uchida's and $\texttt{MTL-Sign}$ require the author to obtain and transmit the entire suspicious DNN model to the verification community for proof.
In contrast, the black-box setting where the author only has to publicize the suspicious service deployed by the adversary.
Among compared schemes, $\texttt{ATGF}$ is the optimal choice concerning the spoil attack under the black-box setting, this is because the distribution of triggers in $\texttt{ATGF}$ is hidden from the adversary.
Therefore, we suggest using the combination of $\texttt{Merkle-Sign}$ and $\texttt{ATGF}$, under which a high level of security is guaranteed.

\section{Conclusion}
\label{section:6}
This paper presents $\texttt{Merkle-Sign}$, a framework for DNN model protection in FL by watermarking the model.
After formulating the threat model and corresponding security requirements, we extend current watermarking schemes and combine them with a data structure for distributed storage.
The security of $\texttt{Merkle-Sign}$ can be reduced to that of basic DNN watermarking schemes and cryptological primitives.
To generalize $\texttt{Merkle-Sign}$ to the black-box setting, we propose a watermarking scheme, $\texttt{ATGF}$, that generates triggers robust against the spoil attack.

Experimental results indicated that $\texttt{Merkle-Sign}$ can provide the desired security for authors in FL.
Moreover, the extra requirements introduced by $\texttt{Merkle-Sign}$ point the direction of designing new watermarking schemes.

The framework proposed in this paper sheds light on the availability of model protection in distributed learning systems such as FL.
$\texttt{Merkle-Sign}$ can be combined with other security mechanisms in FL to simultaneously protect privacy and ownership.
Our future studies are going to exploit watermarking schemes that can be more conveniently combined with distributed learning systems, e.g., watermarking schemes that are robust under the aggregation operator so the embedding can be conducted in a completely decentralized manner.

\begin{acks}
This work is supported by $\cdots$
\end{acks}

\bibliographystyle{ACM-Reference-Format}
\bibliography{WM.bib}


\begin{thebibliography}{32}


\ifx \showCODEN    \undefined \def \showCODEN     #1{\unskip}     \fi
\ifx \showDOI      \undefined \def \showDOI       #1{#1}\fi
\ifx \showISBNx    \undefined \def \showISBNx     #1{\unskip}     \fi
\ifx \showISBNxiii \undefined \def \showISBNxiii  #1{\unskip}     \fi
\ifx \showISSN     \undefined \def \showISSN      #1{\unskip}     \fi
\ifx \showLCCN     \undefined \def \showLCCN      #1{\unskip}     \fi
\ifx \shownote     \undefined \def \shownote      #1{#1}          \fi
\ifx \showarticletitle \undefined \def \showarticletitle #1{#1}   \fi
\ifx \showURL      \undefined \def \showURL       {\relax}        \fi
\providecommand\bibfield[2]{#2}
\providecommand\bibinfo[2]{#2}
\providecommand\natexlab[1]{#1}
\providecommand\showeprint[2][]{arXiv:#2}

\bibitem[\protect\citeauthoryear{Adi, Baum, Cisse, Pinkas, and Keshet}{Adi
  et~al\mbox{.}}{2018}]%
        {adi2018turning}
\bibfield{author}{\bibinfo{person}{Yossi Adi}, \bibinfo{person}{Carsten Baum},
  \bibinfo{person}{Moustapha Cisse}, \bibinfo{person}{Benny Pinkas}, {and}
  \bibinfo{person}{Joseph Keshet}.} \bibinfo{year}{2018}\natexlab{}.
\newblock \showarticletitle{Turning your weakness into a strength: Watermarking
  deep neural networks by backdooring}. In \bibinfo{booktitle}{\emph{27th
  $\{$USENIX$\}$ Security Symposium ($\{$USENIX$\}$ Security 18)}}.
  \bibinfo{pages}{1615--1631}.
\newblock


\bibitem[\protect\citeauthoryear{Augasta and Kathirvalavakumar}{Augasta and
  Kathirvalavakumar}{2012}]%
        {augasta2012reverse}
\bibfield{author}{\bibinfo{person}{M~Gethsiyal Augasta} {and}
  \bibinfo{person}{Thangairulappan Kathirvalavakumar}.}
  \bibinfo{year}{2012}\natexlab{}.
\newblock \showarticletitle{Reverse engineering the neural networks for rule
  extraction in classification problems}.
\newblock \bibinfo{journal}{\emph{Neural processing letters}}
  \bibinfo{volume}{35}, \bibinfo{number}{2} (\bibinfo{year}{2012}),
  \bibinfo{pages}{131--150}.
\newblock


\bibitem[\protect\citeauthoryear{Bonawitz, Ivanov, Kreuter, Marcedone, McMahan,
  Patel, Ramage, Segal, and Seth}{Bonawitz et~al\mbox{.}}{2017}]%
        {bonawitz2017practical}
\bibfield{author}{\bibinfo{person}{Keith Bonawitz}, \bibinfo{person}{Vladimir
  Ivanov}, \bibinfo{person}{Ben Kreuter}, \bibinfo{person}{Antonio Marcedone},
  \bibinfo{person}{H~Brendan McMahan}, \bibinfo{person}{Sarvar Patel},
  \bibinfo{person}{Daniel Ramage}, \bibinfo{person}{Aaron Segal}, {and}
  \bibinfo{person}{Karn Seth}.} \bibinfo{year}{2017}\natexlab{}.
\newblock \showarticletitle{Practical secure aggregation for privacy-preserving
  machine learning}. In \bibinfo{booktitle}{\emph{proceedings of the 2017 ACM
  SIGSAC Conference on Computer and Communications Security}}.
  \bibinfo{pages}{1175--1191}.
\newblock


\bibitem[\protect\citeauthoryear{Darvish~Rouhani, Chen, and
  Koushanfar}{Darvish~Rouhani et~al\mbox{.}}{2019}]%
        {darvish2019deepsigns}
\bibfield{author}{\bibinfo{person}{Bita Darvish~Rouhani},
  \bibinfo{person}{Huili Chen}, {and} \bibinfo{person}{Farinaz Koushanfar}.}
  \bibinfo{year}{2019}\natexlab{}.
\newblock \showarticletitle{DeepSigns: an end-to-end watermarking framework for
  ownership protection of deep neural networks}. In
  \bibinfo{booktitle}{\emph{Proceedings of the Twenty-Fourth International
  Conference on Architectural Support for Programming Languages and Operating
  Systems}}. \bibinfo{pages}{485--497}.
\newblock


\bibitem[\protect\citeauthoryear{Deng}{Deng}{2012}]%
        {deng2012mnist}
\bibfield{author}{\bibinfo{person}{Li Deng}.} \bibinfo{year}{2012}\natexlab{}.
\newblock \showarticletitle{The mnist database of handwritten digit images for
  machine learning research [best of the web]}.
\newblock \bibinfo{journal}{\emph{IEEE Signal Processing Magazine}}
  \bibinfo{volume}{29}, \bibinfo{number}{6} (\bibinfo{year}{2012}),
  \bibinfo{pages}{141--142}.
\newblock


\bibitem[\protect\citeauthoryear{Elhoseny, Selim, and Shankar}{Elhoseny
  et~al\mbox{.}}{2020}]%
        {iot2}
\bibfield{author}{\bibinfo{person}{Mohamed Elhoseny},
  \bibinfo{person}{Mahmoud~Mohamed Selim}, {and} \bibinfo{person}{K Shankar}.}
  \bibinfo{year}{2020}\natexlab{}.
\newblock \showarticletitle{Optimal Deep Learning based Convolution Neural
  Network for digital forensics Face Sketch Synthesis in internet of things
  (IoT)}.
\newblock \bibinfo{journal}{\emph{International Journal of Machine Learning and
  Cybernetics}} (\bibinfo{year}{2020}), \bibinfo{pages}{1--12}.
\newblock


\bibitem[\protect\citeauthoryear{Fung, Yoon, and Beschastnikh}{Fung
  et~al\mbox{.}}{2018}]%
        {fung2018mitigating}
\bibfield{author}{\bibinfo{person}{Clement Fung}, \bibinfo{person}{Chris~JM
  Yoon}, {and} \bibinfo{person}{Ivan Beschastnikh}.}
  \bibinfo{year}{2018}\natexlab{}.
\newblock \showarticletitle{Mitigating sybils in federated learning poisoning}.
\newblock \bibinfo{journal}{\emph{arXiv preprint arXiv:1808.04866}}
  (\bibinfo{year}{2018}).
\newblock


\bibitem[\protect\citeauthoryear{Guan, Feng, Zhang, Zhou, Zhang, and Yu}{Guan
  et~al\mbox{.}}{2020}]%
        {guan2020reversible}
\bibfield{author}{\bibinfo{person}{Xiquan Guan}, \bibinfo{person}{Huamin Feng},
  \bibinfo{person}{Weiming Zhang}, \bibinfo{person}{Hang Zhou},
  \bibinfo{person}{Jie Zhang}, {and} \bibinfo{person}{Nenghai Yu}.}
  \bibinfo{year}{2020}\natexlab{}.
\newblock \showarticletitle{Reversible Watermarking in Deep Convolutional
  Neural Networks for Integrity Authentication}. In
  \bibinfo{booktitle}{\emph{Proceedings of the 28th ACM International
  Conference on Multimedia}}. \bibinfo{pages}{2273--2280}.
\newblock


\bibitem[\protect\citeauthoryear{He, Zhang, Ren, and Sun}{He
  et~al\mbox{.}}{2016}]%
        {he2016deep}
\bibfield{author}{\bibinfo{person}{Kaiming He}, \bibinfo{person}{Xiangyu
  Zhang}, \bibinfo{person}{Shaoqing Ren}, {and} \bibinfo{person}{Jian Sun}.}
  \bibinfo{year}{2016}\natexlab{}.
\newblock \showarticletitle{Deep residual learning for image recognition}. In
  \bibinfo{booktitle}{\emph{Proceedings of the IEEE conference on computer
  vision and pattern recognition}}. \bibinfo{pages}{770--778}.
\newblock


\bibitem[\protect\citeauthoryear{Khan and Salah}{Khan and Salah}{2018}]%
        {khan2018iot}
\bibfield{author}{\bibinfo{person}{Minhaj~Ahmad Khan} {and}
  \bibinfo{person}{Khaled Salah}.} \bibinfo{year}{2018}\natexlab{}.
\newblock \showarticletitle{IoT security: Review, blockchain solutions, and
  open challenges}.
\newblock \bibinfo{journal}{\emph{Future Generation Computer Systems}}
  \bibinfo{volume}{82} (\bibinfo{year}{2018}), \bibinfo{pages}{395--411}.
\newblock


\bibitem[\protect\citeauthoryear{Krizhevsky, Hinton, et~al\mbox{.}}{Krizhevsky
  et~al\mbox{.}}{2009}]%
        {krizhevsky2009learning}
\bibfield{author}{\bibinfo{person}{Alex Krizhevsky}, \bibinfo{person}{Geoffrey
  Hinton}, {et~al\mbox{.}}} \bibinfo{year}{2009}\natexlab{}.
\newblock \showarticletitle{Learning multiple layers of features from tiny
  images}.
\newblock  (\bibinfo{year}{2009}).
\newblock


\bibitem[\protect\citeauthoryear{Le~Merrer, Perez, and Tr{\'e}dan}{Le~Merrer
  et~al\mbox{.}}{2020}]%
        {le2020adversarial}
\bibfield{author}{\bibinfo{person}{Erwan Le~Merrer}, \bibinfo{person}{Patrick
  Perez}, {and} \bibinfo{person}{Gilles Tr{\'e}dan}.}
  \bibinfo{year}{2020}\natexlab{}.
\newblock \showarticletitle{Adversarial frontier stitching for remote neural
  network watermarking}.
\newblock \bibinfo{journal}{\emph{Neural Computing and Applications}}
  \bibinfo{volume}{32}, \bibinfo{number}{13} (\bibinfo{year}{2020}),
  \bibinfo{pages}{9233--9244}.
\newblock


\bibitem[\protect\citeauthoryear{Li and Wang}{Li and Wang}{2021}]%
        {ours}
\bibfield{author}{\bibinfo{person}{Fangqi Li} {and} \bibinfo{person}{Shilin
  Wang}.} \bibinfo{year}{2021}\natexlab{}.
\newblock \showarticletitle{Secure Watermark for Deep Neural Networks with
  Multi-task Learning}.
\newblock \bibinfo{journal}{\emph{arXiv preprint arXiv:2103.10021}}
  (\bibinfo{year}{2021}).
\newblock


\bibitem[\protect\citeauthoryear{Li, Lu, Zhou, Yang, and Shen}{Li
  et~al\mbox{.}}{2013}]%
        {li2013efficient}
\bibfield{author}{\bibinfo{person}{Hongwei Li}, \bibinfo{person}{Rongxing Lu},
  \bibinfo{person}{Liang Zhou}, \bibinfo{person}{Bo Yang}, {and}
  \bibinfo{person}{Xuemin Shen}.} \bibinfo{year}{2013}\natexlab{}.
\newblock \showarticletitle{An efficient merkle-tree-based authentication
  scheme for smart grid}.
\newblock \bibinfo{journal}{\emph{IEEE Systems Journal}} \bibinfo{volume}{8},
  \bibinfo{number}{2} (\bibinfo{year}{2013}), \bibinfo{pages}{655--663}.
\newblock


\bibitem[\protect\citeauthoryear{Li, Willson, Zheng, and Zhao}{Li
  et~al\mbox{.}}{2019b}]%
        {li2019persistent}
\bibfield{author}{\bibinfo{person}{Huiying Li}, \bibinfo{person}{Emily
  Willson}, \bibinfo{person}{Haitao Zheng}, {and} \bibinfo{person}{Ben~Y
  Zhao}.} \bibinfo{year}{2019}\natexlab{b}.
\newblock \showarticletitle{Persistent and unforgeable watermarks for deep
  neural networks}.
\newblock \bibinfo{journal}{\emph{arXiv preprint arXiv:1910.01226}}
  (\bibinfo{year}{2019}).
\newblock


\bibitem[\protect\citeauthoryear{Li, Sahu, Talwalkar, and Smith}{Li
  et~al\mbox{.}}{2020}]%
        {li2020federated}
\bibfield{author}{\bibinfo{person}{Tian Li}, \bibinfo{person}{Anit~Kumar Sahu},
  \bibinfo{person}{Ameet Talwalkar}, {and} \bibinfo{person}{Virginia Smith}.}
  \bibinfo{year}{2020}\natexlab{}.
\newblock \showarticletitle{Federated learning: Challenges, methods, and future
  directions}.
\newblock \bibinfo{journal}{\emph{IEEE Signal Processing Magazine}}
  \bibinfo{volume}{37}, \bibinfo{number}{3} (\bibinfo{year}{2020}),
  \bibinfo{pages}{50--60}.
\newblock


\bibitem[\protect\citeauthoryear{Li, Hu, Zhang, and Guo}{Li
  et~al\mbox{.}}{2019a}]%
        {li2019prove}
\bibfield{author}{\bibinfo{person}{Zheng Li}, \bibinfo{person}{Chengyu Hu},
  \bibinfo{person}{Yang Zhang}, {and} \bibinfo{person}{Shanqing Guo}.}
  \bibinfo{year}{2019}\natexlab{a}.
\newblock \showarticletitle{How to prove your model belongs to you: a
  blind-watermark based framework to protect intellectual property of DNN}. In
  \bibinfo{booktitle}{\emph{Proceedings of the 35th Annual Computer Security
  Applications Conference}}. \bibinfo{pages}{126--137}.
\newblock


\bibitem[\protect\citeauthoryear{Mengelkamp, Notheisen, Beer, Dauer, and
  Weinhardt}{Mengelkamp et~al\mbox{.}}{2018}]%
        {mengelkamp2018blockchain}
\bibfield{author}{\bibinfo{person}{Esther Mengelkamp},
  \bibinfo{person}{Benedikt Notheisen}, \bibinfo{person}{Carolin Beer},
  \bibinfo{person}{David Dauer}, {and} \bibinfo{person}{Christof Weinhardt}.}
  \bibinfo{year}{2018}\natexlab{}.
\newblock \showarticletitle{A blockchain-based smart grid: towards sustainable
  local energy markets}.
\newblock \bibinfo{journal}{\emph{Computer Science-Research and Development}}
  \bibinfo{volume}{33}, \bibinfo{number}{1} (\bibinfo{year}{2018}),
  \bibinfo{pages}{207--214}.
\newblock


\bibitem[\protect\citeauthoryear{Qiu, Zheng, Zhang, Qiu, Memmi, and Lu}{Qiu
  et~al\mbox{.}}{2020}]%
        {iot3}
\bibfield{author}{\bibinfo{person}{Han Qiu}, \bibinfo{person}{Qinkai Zheng},
  \bibinfo{person}{Tianwei Zhang}, \bibinfo{person}{Meikang Qiu},
  \bibinfo{person}{Gerard Memmi}, {and} \bibinfo{person}{Jialiang Lu}.}
  \bibinfo{year}{2020}\natexlab{}.
\newblock \showarticletitle{Towards secure and efficient deep learning
  inference in dependable IoT systems}.
\newblock \bibinfo{journal}{\emph{IEEE Internet of Things Journal}}
  (\bibinfo{year}{2020}), \bibinfo{pages}{preprint}.
\newblock


\bibitem[\protect\citeauthoryear{Reyna, Mart{\'\i}n, Chen, Soler, and
  D{\'\i}az}{Reyna et~al\mbox{.}}{2018}]%
        {reyna2018blockchain}
\bibfield{author}{\bibinfo{person}{Ana Reyna}, \bibinfo{person}{Cristian
  Mart{\'\i}n}, \bibinfo{person}{Jaime Chen}, \bibinfo{person}{Enrique Soler},
  {and} \bibinfo{person}{Manuel D{\'\i}az}.} \bibinfo{year}{2018}\natexlab{}.
\newblock \showarticletitle{On blockchain and its integration with IoT.
  Challenges and opportunities}.
\newblock \bibinfo{journal}{\emph{Future generation computer systems}}
  \bibinfo{volume}{88} (\bibinfo{year}{2018}), \bibinfo{pages}{173--190}.
\newblock


\bibitem[\protect\citeauthoryear{Shone, Ngoc, Phai, and Shi}{Shone
  et~al\mbox{.}}{2018}]%
        {shone2018deep}
\bibfield{author}{\bibinfo{person}{Nathan Shone}, \bibinfo{person}{Tran~Nguyen
  Ngoc}, \bibinfo{person}{Vu~Dinh Phai}, {and} \bibinfo{person}{Qi Shi}.}
  \bibinfo{year}{2018}\natexlab{}.
\newblock \showarticletitle{A deep learning approach to network intrusion
  detection}.
\newblock \bibinfo{journal}{\emph{IEEE transactions on emerging topics in
  computational intelligence}} \bibinfo{volume}{2}, \bibinfo{number}{1}
  (\bibinfo{year}{2018}), \bibinfo{pages}{41--50}.
\newblock


\bibitem[\protect\citeauthoryear{Uchida, Nagai, Sakazawa, and Satoh}{Uchida
  et~al\mbox{.}}{2017}]%
        {uchida2017embedding}
\bibfield{author}{\bibinfo{person}{Yusuke Uchida}, \bibinfo{person}{Yuki
  Nagai}, \bibinfo{person}{Shigeyuki Sakazawa}, {and}
  \bibinfo{person}{Shin'ichi Satoh}.} \bibinfo{year}{2017}\natexlab{}.
\newblock \showarticletitle{Embedding watermarks into deep neural networks}. In
  \bibinfo{booktitle}{\emph{Proceedings of the 2017 ACM on International
  Conference on Multimedia Retrieval}}. \bibinfo{pages}{269--277}.
\newblock


\bibitem[\protect\citeauthoryear{Wang and Kerschbaum}{Wang and
  Kerschbaum}{2019}]%
        {wang2019robust}
\bibfield{author}{\bibinfo{person}{Tianhao Wang} {and} \bibinfo{person}{Florian
  Kerschbaum}.} \bibinfo{year}{2019}\natexlab{}.
\newblock \showarticletitle{Robust and Undetectable White-Box Watermarks for
  Deep Neural Networks}.
\newblock \bibinfo{journal}{\emph{arXiv preprint arXiv:1910.14268}}
  (\bibinfo{year}{2019}).
\newblock


\bibitem[\protect\citeauthoryear{Wei, Li, Ding, Ma, Yang, Farokhi, Jin, Quek,
  and Poor}{Wei et~al\mbox{.}}{2020}]%
        {wei2020federated}
\bibfield{author}{\bibinfo{person}{Kang Wei}, \bibinfo{person}{Jun Li},
  \bibinfo{person}{Ming Ding}, \bibinfo{person}{Chuan Ma},
  \bibinfo{person}{Howard~H Yang}, \bibinfo{person}{Farhad Farokhi},
  \bibinfo{person}{Shi Jin}, \bibinfo{person}{Tony~QS Quek}, {and}
  \bibinfo{person}{H~Vincent Poor}.} \bibinfo{year}{2020}\natexlab{}.
\newblock \showarticletitle{Federated learning with differential privacy:
  Algorithms and performance analysis}.
\newblock \bibinfo{journal}{\emph{IEEE Transactions on Information Forensics
  and Security}}  \bibinfo{volume}{15} (\bibinfo{year}{2020}),
  \bibinfo{pages}{3454--3469}.
\newblock


\bibitem[\protect\citeauthoryear{Xiao, Rasul, and Vollgraf}{Xiao
  et~al\mbox{.}}{2017}]%
        {xiao2017/online}
\bibfield{author}{\bibinfo{person}{Han Xiao}, \bibinfo{person}{Kashif Rasul},
  {and} \bibinfo{person}{Roland Vollgraf}.} \bibinfo{year}{2017}\natexlab{}.
\newblock \bibinfo{booktitle}{\emph{Fashion-MNIST: a Novel Image Dataset for
  Benchmarking Machine Learning Algorithms}}.
\newblock
\showeprint[arXiv]{cs.LG/1708.07747}~[cs.LG]


\bibitem[\protect\citeauthoryear{Xu, Li, Liu, Yang, and Lin}{Xu
  et~al\mbox{.}}{2019}]%
        {xu2019verifynet}
\bibfield{author}{\bibinfo{person}{Guowen Xu}, \bibinfo{person}{Hongwei Li},
  \bibinfo{person}{Sen Liu}, \bibinfo{person}{Kan Yang}, {and}
  \bibinfo{person}{Xiaodong Lin}.} \bibinfo{year}{2019}\natexlab{}.
\newblock \showarticletitle{Verifynet: Secure and verifiable federated
  learning}.
\newblock \bibinfo{journal}{\emph{IEEE Transactions on Information Forensics
  and Security}}  \bibinfo{volume}{15} (\bibinfo{year}{2019}),
  \bibinfo{pages}{911--926}.
\newblock


\bibitem[\protect\citeauthoryear{{Xu}, {Li}, {Zhang}, {Lin}, {Deng}, and
  {Shen}}{{Xu} et~al\mbox{.}}{2020}]%
        {9359144}
\bibfield{author}{\bibinfo{person}{G. {Xu}}, \bibinfo{person}{H. {Li}},
  \bibinfo{person}{Y. {Zhang}}, \bibinfo{person}{X. {Lin}},
  \bibinfo{person}{R.~H. {Deng}}, {and} \bibinfo{person}{X. {Shen}}.}
  \bibinfo{year}{2020}\natexlab{}.
\newblock \showarticletitle{A Deep Learning Framework Supporting Model
  Ownership Protection and Traitor Tracing}. In \bibinfo{booktitle}{\emph{2020
  IEEE 26th International Conference on Parallel and Distributed Systems
  (ICPADS)}}. \bibinfo{pages}{438--446}.
\newblock
\urldef\tempurl%
\url{https://doi.org/10.1109/ICPADS51040.2020.00084}
\showDOI{\tempurl}


\bibitem[\protect\citeauthoryear{Yang, Liu, Chen, and Tong}{Yang
  et~al\mbox{.}}{2019}]%
        {yang2019federated}
\bibfield{author}{\bibinfo{person}{Qiang Yang}, \bibinfo{person}{Yang Liu},
  \bibinfo{person}{Tianjian Chen}, {and} \bibinfo{person}{Yongxin Tong}.}
  \bibinfo{year}{2019}\natexlab{}.
\newblock \showarticletitle{Federated machine learning: Concept and
  applications}.
\newblock \bibinfo{journal}{\emph{ACM Transactions on Intelligent Systems and
  Technology (TIST)}} \bibinfo{volume}{10}, \bibinfo{number}{2}
  (\bibinfo{year}{2019}), \bibinfo{pages}{1--19}.
\newblock


\bibitem[\protect\citeauthoryear{Yin, Zhu, Fei, and He}{Yin
  et~al\mbox{.}}{2017}]%
        {yin2017deep}
\bibfield{author}{\bibinfo{person}{Chuanlong Yin}, \bibinfo{person}{Yuefei
  Zhu}, \bibinfo{person}{Jinlong Fei}, {and} \bibinfo{person}{Xinzheng He}.}
  \bibinfo{year}{2017}\natexlab{}.
\newblock \showarticletitle{A deep learning approach for intrusion detection
  using recurrent neural networks}.
\newblock \bibinfo{journal}{\emph{Ieee Access}}  \bibinfo{volume}{5}
  (\bibinfo{year}{2017}), \bibinfo{pages}{21954--21961}.
\newblock


\bibitem[\protect\citeauthoryear{Zhang, Chen, Liao, Zhang, Feng, Hua, and
  Yu}{Zhang et~al\mbox{.}}{2021}]%
        {zhang2021deep}
\bibfield{author}{\bibinfo{person}{Jie Zhang}, \bibinfo{person}{Dongdong Chen},
  \bibinfo{person}{Jing Liao}, \bibinfo{person}{Weiming Zhang},
  \bibinfo{person}{Huamin Feng}, \bibinfo{person}{Gang Hua}, {and}
  \bibinfo{person}{Nenghai Yu}.} \bibinfo{year}{2021}\natexlab{}.
\newblock \showarticletitle{Deep Model Intellectual Property Protection via
  Deep Watermarking}.
\newblock \bibinfo{journal}{\emph{IEEE Transactions on Pattern Analysis and
  Machine Intelligence}} (\bibinfo{year}{2021}).
\newblock


\bibitem[\protect\citeauthoryear{Zhang, Gu, Jang, Wu, Stoecklin, Huang, and
  Molloy}{Zhang et~al\mbox{.}}{2018}]%
        {zhang2018protecting}
\bibfield{author}{\bibinfo{person}{Jialong Zhang}, \bibinfo{person}{Zhongshu
  Gu}, \bibinfo{person}{Jiyong Jang}, \bibinfo{person}{Hui Wu},
  \bibinfo{person}{Marc~Ph Stoecklin}, \bibinfo{person}{Heqing Huang}, {and}
  \bibinfo{person}{Ian Molloy}.} \bibinfo{year}{2018}\natexlab{}.
\newblock \showarticletitle{Protecting intellectual property of deep neural
  networks with watermarking}. In \bibinfo{booktitle}{\emph{Proceedings of the
  2018 on Asia Conference on Computer and Communications Security}}.
  \bibinfo{pages}{159--172}.
\newblock


\bibitem[\protect\citeauthoryear{Zhu, Zhang, Shi, and Tang}{Zhu
  et~al\mbox{.}}{2020}]%
        {zhu2020secure}
\bibfield{author}{\bibinfo{person}{Renjie Zhu}, \bibinfo{person}{Xinpeng
  Zhang}, \bibinfo{person}{Mengte Shi}, {and} \bibinfo{person}{Zhenjun Tang}.}
  \bibinfo{year}{2020}\natexlab{}.
\newblock \showarticletitle{Secure neural network watermarking protocol against
  forging attack}.
\newblock \bibinfo{journal}{\emph{EURASIP Journal on Image and Video
  Processing}} \bibinfo{volume}{2020}, \bibinfo{number}{1}
  (\bibinfo{year}{2020}), \bibinfo{pages}{1--12}.
\newblock


\end{thebibliography}

\appendix
\section{The Proof of Theorem \ref{theorem:2}}
\begin{proof}
To pirate a model under this verification framework, an adversary can take three possible approaches:
(1) Broadcasting a fake message with an early time-stamp than $\mathcal{A}$'s, writing its key into the model, and declaring ownership.
(2) Pretending to be an author that has participated in FL.
(3) Pretending to be an author whose key has been spoiled.

The first attack succeeds with negligible probability since the adversary has to correctly guess the structure of the DNN model to correctly provides $\texttt{info}$.
Moreover, the adversary has to tune the DNN model so its watermarking branch is consistent with its key, during which process the verifier module is fixed.
This process might bring damage to the model's performance.

For the second attack, if the adversary has not eavesdropped any verification proof then it has to find the preimage of $T_{\text{root}}$ under $\texttt{hash}_{2}$ to forge a legal key.
Hence a PPT adversary $\mathcal{A}_{\text{pretend}}$ that suceeds in pretending to be an uninformed author participating FL can be used to build a PPT adversary $\mathcal{A}_{\text{collide}}$ that finds a collision w.r.t. $\texttt{hash}_{2}$.
Formally, $\mathcal{A}_{\text{collide}}$ operates as Algo.~\ref{algorithm:rc}.
\begin{algorithm}[htbp]
\caption{PPT $\mathcal{A}_{\text{collide}}$ that finds a collsion for $\texttt{hash}_{2}$.}
\label{algorithm:rc}
\begin{algorithmic}[1]
\REQUIRE A PPT algorithm $\mathcal{A}_{\text{pretend}}$, with which an adversary can falsify itself as $u_{s}$ without eavesdropping any proof with non-negligible probability.
\ENSURE $x$ and $y$ such $\texttt{hash}_{2}(y)=\texttt{hash}_{2}(x)$.
\STATE $\mathcal{A}_{\text{invert}}$ generates and distributed public and private keys for all authors.
\STATE $\mathcal{A}_{\text{collide}}$ receives the key from the adversary runing $\mathcal{A}_{\text{pretend}}$.
\STATE $\mathcal{A}_{\text{collide}}$ simulates Algo.~\ref{algorithm:centralized}, builds a Merkle-tree with intermediate nodes $T_{0}$ and $T_{1}$.
\STATE $\mathcal{A}_{\text{collide}}$ runs $\mathcal{A}_{\text{pretend}}$ on the root node of its Merkle-tree.
\STATE $\mathcal{A}_{\text{collide}}$ receives $\mathcal{A}_{\text{pretend}}$'s output, which contains enough information for computing $T_{\text{root}}$, especially nodes of the second level $\tilde{T_{0}}$ and $\tilde{T_{1}}$.
\STATE $\mathcal{A}_{\text{collide}}$ returns $T_{0}\|T_{1}$ and $\tilde{T_{0}}\|\tilde{T_{1}}$.
\end{algorithmic}
\end{algorithm}
The event that $\mathcal{A}_{\text{collide}}$ succeeds in finding a collision pair $x\neq y$ happens if: $\mathcal{A}_{\text{pretend}}$ succeeds in pretending to be an author and $T_{0}\|T_{1}\neq\tilde{T_{0}}\|\tilde{T_{1}}$.
Therefore:
$$\text{Pr}([\mathcal{A}_{\text{collide}}\text{ wins}])\geq \text{Pr}([\mathcal{A}_{\text{pretent}}\text{ wins}]\land [T_{0}\|T_{1}\neq\tilde{T_{0}}\|\tilde{T_{1}}]).$$
Recall that the collision resistance of $\texttt{hash}_{2}$ implies that $\text{Pr}([\mathcal{A}_{\text{collide}}\text{ wins}])$ is upper bounded by a negligible function, which indicates that:
$$
\begin{aligned}
&\text{Pr}([\mathcal{A}_{\text{pretent}}\text{ wins}]\land [T_{0}\|T_{1}=\tilde{T_{0}}\|\tilde{T_{1}}])\\
=&\text{Pr}([\mathcal{A}_{\text{pretent}}\text{ wins}])-\text{Pr}([\mathcal{A}_{\text{pretent}}\text{ wins}]\land [T_{0}\|T_{1}\neq\tilde{T_{0}}\|\tilde{T_{1}}])
\end{aligned}
$$
is non-negligible given the assumption that $\text{Pr}([\mathcal{A}_{\text{pretent}}\text{ wins}])$ is non-negligible.
This further implies that $\mathcal{A}_{\text{pretend}}$ can efficiently inverse $\texttt{hash}_{2}$ with non-negligible probability, which is contradictive to the fact that $\texttt{hash}_{2}$ is also preimage resistant~\footnote{Collision resistance implies preimage resistance.}.
Therefore such $\mathcal{A}_{\text{pretend}}$ does not exist.

If the adversary has eavesdropped a verification proof then it has to find the preimage of a string under $\texttt{hash}_{1}$ for at least once.
Similar to Algo.~\ref{algorithm:pp}, the impossibility of such an attack is reduced to the security of $\texttt{hash}_{1}$.

For the third attack, the adversary has to invert $\texttt{hash}_{1}$ for at least once, which succeeds only with negligible probability as in the second case in the second attack.
\end{proof}

\section{$\texttt{Merkle-Sign}$ for Peer-to-Peer FL}
The proposed framework can be generalized to peer-to-peer (P2P), or decentralized FL, in which no aggregator is involved in the entire process. 
In P2P FL, authors form a chain along with the DNN model is trained and transmitted. 
An illustration of the P2P FL is in Fig.~\ref{figure:p2p}.
\begin{figure}[htbp]
\centering
\includegraphics[width=8cm]{./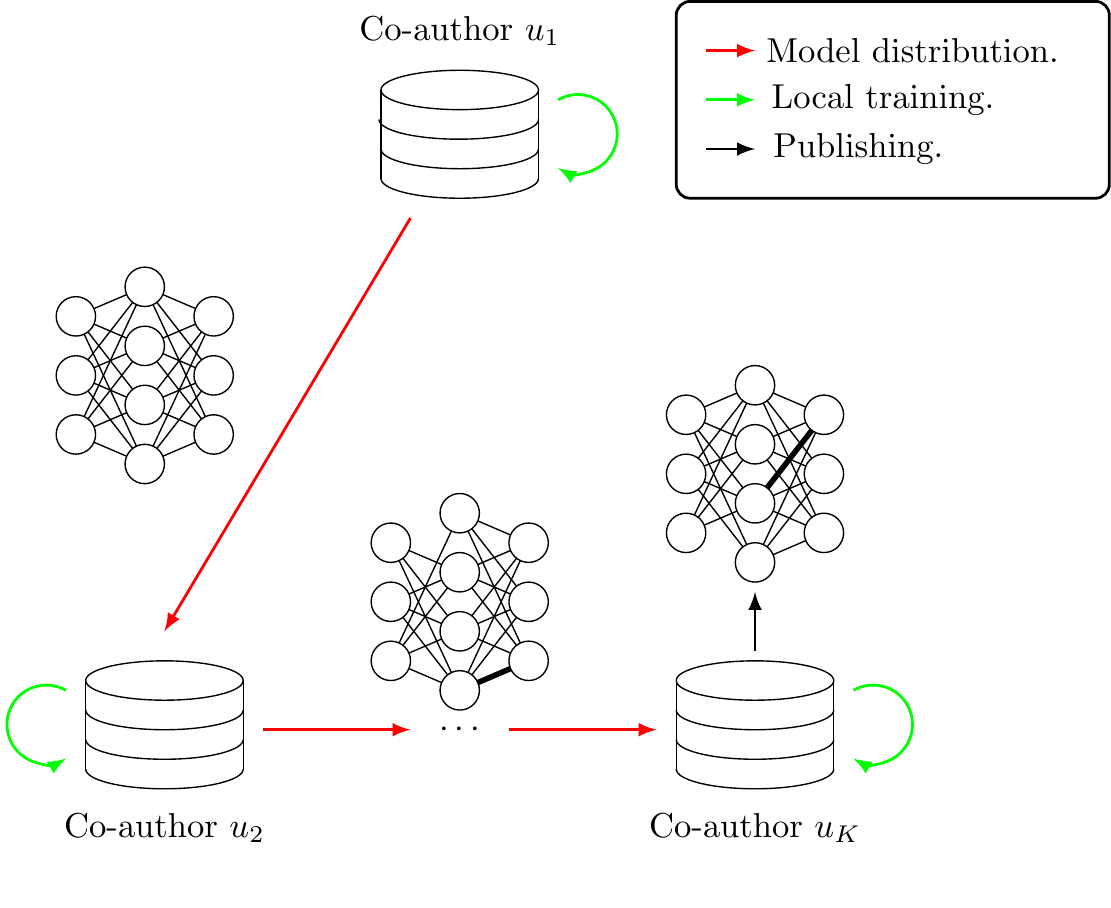}
\caption{The architectures of P2P FL.}
\label{figure:p2p}
\end{figure}

To adapt to P2P FL, $\texttt{Merkle-Sign}$ requires all authors $\mathcal{U}$ follow Algo.~\ref{algorithm:decentralized} as shown in Figure~\ref{figure:d}.
The function $\texttt{hash}_{1}$ adopted by author $u_{(t)}$ in the seventh and the eleventh steps is $\texttt{Enc}_{(t)}(\texttt{hash}_{0}(\cdot))$.
Finally, $u_{(T)}$ uses $\texttt{Enc}_{(T)}(\texttt{hash}_{0}(\cdot))$ to mask its key set and verifiers in the twelfth step.
\begin{algorithm}[htbp]
\caption{$\texttt{Merkle-Sign}$ for P2P FL.}
\label{algorithm:decentralized}
\begin{algorithmic}[1]
\REQUIRE A one-time watermarking scheme $\texttt{WM}$, security parameters $N$, $L$, functions $\texttt{hash}_{1}$ and $\texttt{hash}_{2}$.
\ENSURE A watermarked model $M_{\mathcal{A}}$ and evidence for verification.
\STATE Each $u_{k}\in\mathcal{U}$ generates $\texttt{key}_{\ k}\leftarrow\texttt{Gen}(1^{N})$.
\STATE $\mathcal{U}$ forms a schedule $(u_{(1)},u_{(2)},\cdots,u_{(T)})$.
\STATE $u_{(0)}$ initializes a clean model.
\FOR {$t=0$ to $T-1$}
\STATE $u_{(t)}$ generates time-dependent $\texttt{key}_{(t)}^{\dag}\leftarrow\texttt{Gen}(1^{N})$.
\STATE $u_{(t)}$ tunes the model, embeds $\texttt{KEYs}_{(t)}=\left\{\texttt{key}_{(t)},\texttt{key}_{(t)}^{\dag}\right\}$ into the model, obtains $\texttt{VERs}_{(t)}=\left\{\texttt{verify}_{(t)},\texttt{verify}_{(t)}^{\dag}\right\}$.
\STATE $u_{(t)}$ signs and broadcasts the following message to the verification community.
$$\langle\texttt{time}\|\texttt{Merkle}(\texttt{KEYs}_{(t)},\texttt{VERs}_{(t)},\texttt{info}) \rangle.$$
\STATE $u_{(t)}$ transmits the model to the $u_{(t+1)}$.
\ENDFOR
\STATE $u_{(T)}$ tunes and embeds $\mathcal{K}_{(T)}$ into the model, obtains $\left\{ \texttt{verify}_{(T)}\right\}$.
\STATE Each $u_{k}\in\mathcal{U}/\left\{u_{(T)}\right\}$ sends $\left\{\texttt{hash}_{1}(\texttt{key}_{k})\right\}$ and $\left\{\texttt{hash}_{1}(\texttt{verify}_{k})\right\}$ to $u_{(T)}$.
\STATE $u_{(T)}$ builds a Merkle-tree from the information received, signs and broadcasts:
$$\langle \texttt{time}\|\texttt{Merkle}(\texttt{KEYs},\texttt{VERs},\texttt{info}) \rangle.$$
\STATE $u_{(T)}$ publishes the model.
\end{algorithmic}
\end{algorithm}

We assume that the watermarking scheme only modifies a tiny part of the model (e.g, few parameters or few trigger samples) so such embedding is not going to prevent the model from convergence.
The capacity of the model w.r.t. the watermarking scheme has to be larger than $T$ to ensure the model's performance.
The basic security requirements, the independency, the privacy-preserving, and the recovery properties hold with similar discussion as in the aggregator-based case.
\begin{figure}[htbp]
\centering
\includegraphics[width=8cm]{./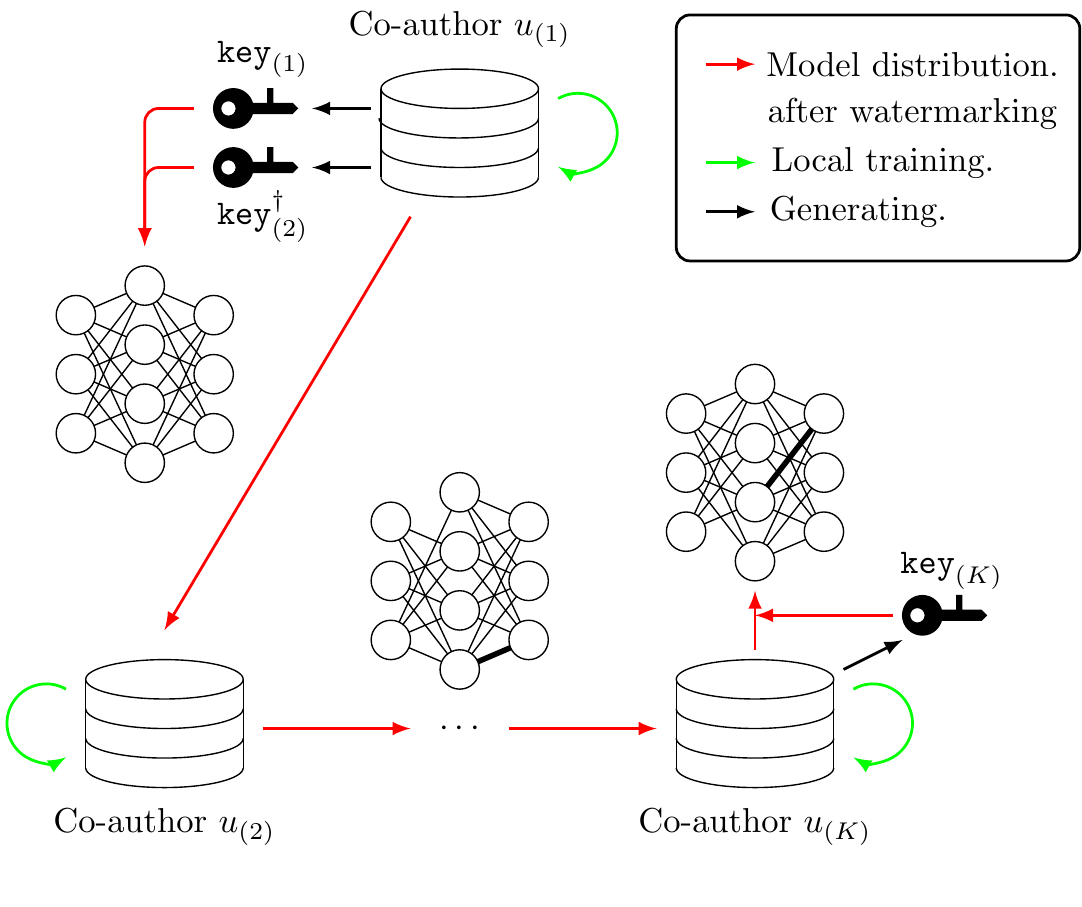}
\caption{The $\texttt{Merkle-Sign}$ watermarking framework for P2P FL.}
\label{figure:d}
\end{figure}
For traitor-tracing, the authors $\mathcal{U}$ follow Algo.~\ref{algorithm:ttd}.
The correctness is guaranteed by the following theorem.
\begin{algorithm}[htbp]
\caption{Traitor-tracing under $\texttt{Merkle-Sign}$ for P2P FL.}
\label{algorithm:ttd}
\begin{algorithmic}[1]
\REQUIRE The pirated model $M$.
\ENSURE The index of the traitor.
\FOR {$t=1$ to $T$}
\STATE $u_{(t)}$ proves its ownership over $M$.
\STATE $u_{(t)}$ presents to its co-authors: $\texttt{key}_{(t)}^{\dag}$, $\texttt{verify}_{(t)}^{\dag}$, and $\left\{\texttt{hash}_{1}(\texttt{key}_{(t)}), \texttt{hash}_{1}(\texttt{verify}_{(t)})\right\}$.
\STATE Other authors check whether the presented information is consistent with $u_{(t)}$'s historical broadcasting.
\STATE Other authors vote on $\texttt{verify}_{(t)}^{\dag}\left(M,\texttt{key}_{(t)}^{\dag}\right)$.
\STATE If the voted result is zero then return $(t-1)$.
\ENDFOR
\STATE Return $(T)$.
\end{algorithmic}
\label{exp:1}
\end{algorithm}
\begin{theorem}
If $u_{(t)}$ is the traitor who sells the model, then Algo.~\ref{algorithm:ttd} can almost always correctly identify it.
\end{theorem}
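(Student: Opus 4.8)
The plan is to reduce the correctness of Algo.~\ref{algorithm:ttd} to two complementary guarantees about the surveillance keys embedded along the training chain, and then argue that the linear scan in the algorithm detects exactly the boundary between the embedded keys and the absent ones. Write $u_{(t^*)}$ for the traitor and let $M$ be the model it sells. Because the model is passed along the schedule $(u_{(1)},\ldots,u_{(T)})$ and each $u_{(t)}$ embeds its surveillance key $\texttt{key}_{(t)}^{\dag}$ during its own round (steps 5--6 of Algo.~\ref{algorithm:decentralized}) before transmitting, the intermediate model obtained by $u_{(t^*)}$ carries precisely $\texttt{key}_{(1)}^{\dag},\ldots,\texttt{key}_{(t^*)}^{\dag}$ and none of $\texttt{key}_{(t^*+1)}^{\dag},\ldots,\texttt{key}_{(T)}^{\dag}$. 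It therefore suffices to show that for $t\leq t^*$ the test $\texttt{verify}_{(t)}^{\dag}(M,\texttt{key}_{(t)}^{\dag})=1$ passes, while for $t=t^*+1$ it fails: the first failure makes step 6 return $(t^*+1)-1=t^*$, and when $t^*=T$ no failure occurs and step 8 returns $T=t^*$.

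For the positive direction I would invoke the correctness and the security-against-tuning of the underlying one-time scheme $\texttt{WM}$: each $\texttt{key}_{(t)}^{\dag}$ with $t\leq t^*$ was genuinely embedded, so $\texttt{verify}_{(t)}^{\dag}$ recognizes it with probability at least $1-\epsilon$, and any generic tuning the traitor or the buyer applies to $M$ cannot invalidate it. The delicate point is that the traitor cannot spoil these keys selectively: the surveillance keys of the earlier authors are shielded by the preimage resistance of $\texttt{hash}_{0}$, since the traitor only ever sees the committed digests $\texttt{hash}_{1}(\texttt{key}_{(s)}^{\dag})$ broadcast in step 7 and never the keys themselves, so the reduction of Theorem~\ref{theorem:1} applies verbatim to bound targeted spoiling by a negligible quantity. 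A union bound over the at most $t^*\leq T$ honest keys then shows that all of them verify simultaneously except with probability at most $T\epsilon$, which is still negligible.

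For the negative direction I would appeal to the soundness clause of the correctness requirement: $\texttt{key}_{(t^*+1)}^{\dag}$ is drawn freshly by $\texttt{Gen}(1^N)$ after $M$ has left the traitor's hands and is independent of $M$, hence it behaves as a uniformly sampled key and $\texttt{verify}_{(t^*+1)}^{\dag}(M,\texttt{key}_{(t^*+1)}^{\dag})=1$ holds with probability at most $\epsilon$. The consistency check of step 4 and the community vote of step 5 seal the remaining gaps: the former binds every $u_{(t)}$ to the key it committed to in its broadcast, so no author can substitute a convenient key to pass or fail the test at will, and the latter renders the aggregated outcome faithful under the honest-majority assumption on the verification community. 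I expect the main obstacle to lie in this negative direction against an adaptively tampering adversary, namely ruling out that the traitor tunes $M$ before the sale so that some later author's surveillance key spuriously verifies and the scan overshoots $t^*$. This is exactly where the unpredictability of the future keys is indispensable: since $\texttt{key}_{(t^*+1)}^{\dag}$ is unknown to the traitor at sale time, forcing a false positive is as hard as guessing a fresh key, and the soundness bound $\epsilon$ controls it.
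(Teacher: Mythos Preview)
Your proposal is correct and follows essentially the same line as the paper: the scan in Algo.~\ref{algorithm:ttd} halts at the first index whose surveillance key is absent, and the traitor cannot manipulate this boundary because the relevant surveillance keys are hidden from it. Both proofs reduce correctness to the secrecy of the neighbouring surveillance key and to the correctness/soundness of $\texttt{WM}$.

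Two differences are worth recording. First, the paper simplifies the positive direction by \emph{assuming} the traitor follows Algo.~\ref{algorithm:decentralized} apart from the sale; under that assumption the earlier surveillance keys survive trivially, and the paper never needs your union-bound argument over $t\leq t^{*}$. Your treatment is stronger because it handles an actively tampering traitor, but it leans on one ingredient you omit: you write that the traitor ``only ever sees the committed digests $\texttt{hash}_{1}(\texttt{key}_{(s)}^{\dag})$'', yet in the P2P chain the traitor also holds a \emph{model} in which those earlier surveillance keys are already embedded. Preimage resistance of $\texttt{hash}_{0}$ alone does not rule out extracting $\texttt{key}_{(s)}^{\dag}$ from the watermarked weights; this is exactly where the paper invokes the covertness and privacy-preserving properties of $\texttt{WM}$ (together with the one-wayness of $\texttt{Merkle}$). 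Adding that clause closes the gap in your positive direction. Second, the paper's own proof carries an index slip (it states the traitor as $u_{(t)}$ but concludes for $u_{(t-1)}$); your consistent use of $t^{*}$ avoids that confusion and makes the boundary argument cleaner.
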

\begin{proof}
We assume that $u_{(t)}$ follows Algo.~\ref{algorithm:decentralized} except for its unauthorized reselling.
Otherwise, it cannot verify and protect its contribution in the model, and it might be identified as the traitor from the fourth step in Algo.~\ref{algorithm:ttd}.
Compared with the model released by $u_{(t)}$ to $u_{(t+1)}$, the model in $u_{(t-1)}$'s perspective does not contain $\texttt{key}_{(t)}^{\dag}$.
The secrecy of $\texttt{key}_{(t)}^{\dag}$ is protected by the CPA-security, the covertness, and the privacy-preserving property of the underlying watermarking scheme, together with the one-wayness of $\texttt{Merkle}$.
If $u_{(t-1)}$ can successfully embed this surveillance key onto the model and escape tracing then it must have obtained $\texttt{key}_{(t)}^{\dag}$, which happens only with negligible probability.
By the definition of the correctness of $\texttt{WM}$, $u_{(t-1)}$ cannot forge the evidence for $\texttt{key}_{(t)}^{\dag}$.
Therefore, Algo.~\ref{algorithm:ttd} is sufficient to identify $u_{(t-1)}$ as the traitor.
\end{proof}

\end{document}